\theoremstyle{definition}
\newtheorem{definition}{Definition}[section]
\newtheorem{theorem}{Theorem}[section]
\newtheorem{lemma}{Lemma}[section]
\numberwithin{equation}{section}
\newenvironment{hproof}{%
  \proof}{\endproof}
\DeclareMathOperator{\Tr}{Tr}
\newcommand{\ket}[1]{| #1 \rangle}
\title{Hyper-Invariant MERA: Approximate Holographic Error Correction Codes with Power-Law Correlations}
\author[1]{ChunJun Cao\thanks{ccj991@gmail.com}}
\affil[1]{Joint Center for Quantum Information and Computer Science, University of Maryland, College Park, MD, 20742, USA}
\author[2]{Jason Pollack\thanks{jpollack@cs.utexas.edu}}
\affil[2]{Quantum Information Center and Department of Computer Science, University of Texas at Austin, TX, 78712, USA}
\author[3]{Yixu Wang\thanks{wangyixu@terpmail.umd.edu}}
\affil[3]{Department of Physics and Maryland Center for Fundamental Physics, University of Maryland, College Park, MD, 20742, USA}
\begin{document}

\maketitle
\begin{abstract}
    We consider a class of holographic tensor networks that are efficiently contractible variational ansatze, manifestly (approximate) quantum error correction codes, and can support power-law correlation functions. In the case when the network consists of a single type of tensor that also acts as an erasure correction code, we show that it cannot be both locally contractible and sustain power-law correlation functions. Motivated by this no-go theorem, and the desirability of local contractibility for an efficient variational ansatz, we provide guidelines for constructing networks consisting of multiple types of tensors that can support power-law correlation. We also provide an explicit construction of one such network, which approximates the holographic HaPPY pentagon code in the limit where variational parameters are taken to be small.
\end{abstract}

\tableofcontents

\section{Introduction}
Tensor networks are powerful tools to elucidate many-body physics \cite{Orus:2013kga,Orus:2018dya} and quantum gravity\cite{Swingle:2009bg,Chirco:2017vhs}. More recently, they have also been used to study quantum error correction both in the context of holography\cite{Pastawski:2015qua,Hayden:2016cfa,Cao:2020ksw} and beyond\cite{FerrisPoulin14,Harris2018,Farrelly:2020mxf,CaoLackey:2021}. For the former category, it was proposed that the Multiscale Entanglement Renormalization Ansatz (MERA) \cite{Vidal2008} resembles a discretized version of the AdS/CFT correspondence \cite{Swingle:2009bg}. At the same time, the AdS/CFT correspondence itself is believed to implement semi-classical bulk physics as a quantum error correction code (QECC) \cite{Almheiri:2014lwa}, where the bulk low-energy subspace defines a code subspace. A number of models of this correspondence have been given using tensor networks \cite{Pastawski:2015qua,Yang:2015uoa,PastawskiPreskill,Hayden:2016cfa,Bao:2018pvs,Kohler:2018kqk,Cao:2020ksw}. 
See \cite{Jahn:2021uqr} for a recent review.

While these constructions
(MERA\cite{Vidal2008}, HaPPY \cite{Pastawski:2015qua}, random tensor networks \cite{Hayden:2016cfa}, etc.) all capture some desirable features of the AdS/CFT correspondence, it is somewhat unclear how to unify these features under a single framework. MERA is an efficient variational ansatz capable of capturing the correlations and entanglement of the ground state while realizing an approximate quantum error correction code \cite{Kim:2016wby}.  However, it does not fully capture the correct symmetries \cite{Beny:2011vh,Czech:2015kbp,Bao:2017qmt} of AdS/CFT and is somewhat at odds with gravitational expectations for entropy bounds \cite{AdSMERA2015}. These flaws are amended in the hyper-invariant tensor network \cite{Evenbly2017}. However, in neither construction is there a clear physical picture of entanglement wedge reconstruction or code properties. For example, in MERA it is not clear how explicit logical operations can be realized on the boundary, or, equivalently, how to reconstruct some bulk operator from boundary operators. Although one can study cursory properties of the code\cite{Kim:2016wby}, its decoding and error correction properties are far from manifest. As for the hyper-invariant tensor network, it is not clear how it should be interpreted as a QECC.

The HaPPY code \cite{Pastawski:2015qua} on the other hand, being a stabilizer code, is a clear-cut QECC with well-defined logical and decoding operations\cite{Harris2018,Cao:2020ksw,CaoLackey:2021}. It also excels in reproducing the desirable features of subregion duality and bulk reconstruction. However, it is very limited as a variational ansatz for studying many-body quantum states. Without assistance from bulk correlations, it also produces trivial boundary two-point functions. Even with tuning of the bulk state, the boundary correlations cannot be made consistent with that of a CFT\cite{Gesteau:2020hoz}. Nevertheless, it is possible to produce some non-trivial correlations sparsely with localized operators\cite{Jahn2020}.
In the limit of large bond dimension, random tensor networks \cite{Hayden:2016cfa} can also produce a QECC and can sustain power-law decaying correlations. In principle, given a suitably chosen tessellation of the (hyperbolic) bulk, they also capture the symmetries of an underlying geometry. However, the randomness and the large bond-dimension limit inherent in their definition make it difficult for these networks to function as variational ansatze usable for many-body physics. Code properties and decoding also remain underexplored in this picture.

In this work, we bridge these gaps by creating what we call a hyper-invariant MERA (HMERA) tensor network that combines the features of MERA and those of a holographic QECC. The HMERA is a variational ansatz that can be efficiently contracted to allow computation of correlation functions in a way that is similar to the traditional MERA. As a(n approximate) QECC, the network is able to support the correct power-law behaviour for the two-point function, as well as features of subregion duality/bulk reconstruction. We provide a set of general construction guidelines for building these HMERA networks and provide one explicit example which is manifestly a quantum error correction code and well-approximates the HaPPY pentagon code\cite{Pastawski:2015qua} if we choose some of the variational parameters to be small. We also discuss some general constraints such codes follow. In particular, we prove a no-go theorem, which states that locally contractible HMERAs constructed from any single kind of quantum erasure correction code must contain trivial correlation functions. Hence our explicit construction, with two distinct types of tensors, is one of the simplest possible networks to exhibit power-law correlations.

We provide some basic definitions in Section \ref{sec:review}. In Section \ref{sec:general}, we discuss the constraints necessary for the HMERA to exhibit non-trivial correlation functions. We state a no-go theorem (proved in Appendix \ref{app:proof}) which requires that an HMERA with non-trivial correlators must necessarily contain more than one type of tensor. In Section \ref{sec:model}, we, therefore, turn to the construction of models with multiple types of tensors, giving both general guidelines for constructions and an explicit realization (with additional details postponed to Appendix \ref{app:details}). Finally, in Section \ref{sec:discussion} we finish with a discussion and concluding comments.

\section{Definitions}\label{sec:review}
Typically, a MERA in the literature often refers to a multi-scale tensor network consisting of two types of tensors, the disentangler and isometry. However, the broader definition of MERA can extend to more general tensor networks such as the HaPPY code\cite{Pastawski:2015qua}, the hyper-invariant tensor network\cite{Evenbly2017}, and other similar network geometries. To avoid this potential ambiguity, here we define MERA networks that are consistent with some hyperbolic tiling \emph{hyper-invariant MERAs}, or HMERAs. We use MERA to specifically refer to the constructions described in \cite{Vidal2008}.


\begin{definition}
A \emph{hyper-invariant MERA (HMERA)} is a tensor network built on the dual graph on a uniform hyperbolic tessellation such that the tensor layouts are consistent with the discrete symmetries of the tessellation. 
\end{definition}
Examples of such HMERAs include the HaPPY code and hyper-invariant tensor networks. In these constructions, all tensors associated with the tiles, vertices and edges are chosen to be the same. However, one can, in general, place different tensors in various locations as long as they have the right degrees, i.e., the number of dangling edges; see e.g. Figure \ref{fig:HMERA model} below for an example. The usual MERA (see for example figure 7 of \cite{AdSMERA2015}), on the other hand, is \emph{not} an HMERA because the underlying tiling is not uniform. 

\begin{definition}\label{def:k-isometry}
A tensor of degree $N$ with constant bond dimension on each leg is \emph{$\ell$-isometric} if contracting $N-\ell$ legs of the tensors with its conjugate transpose yields $I^{\otimes\ell}$. In addition, it is \emph{permutation-invariant} if it is $\ell$-isometric for any such $(N-\ell)$-leg contractions.
\end{definition}

Graphically, this property is shown in Figure~\ref{fig:l_isometry}.
\begin{figure}[H]
    \centering
    \includegraphics[width=0.8\textwidth]{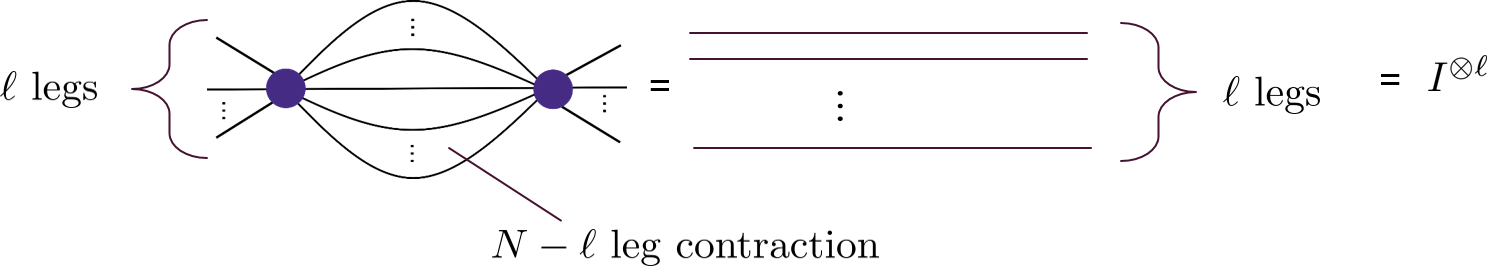}
    \caption{An example of a tensor where $N-\ell$ legs are contracted. Straight lines denote identity operators.}
    \label{fig:l_isometry}
\end{figure}

One can use such isometries as encoding isometries of quantum error correction codes. Here we call something a code when it encodes at least one logical degree of freedom (more generally, $k$ of them). This, for example, rules out the $k=0$ codes which simply encode a state. The encoded logical degree of freedom is often referred to as the \emph{bulk} degree of freedom in holographic tensor networks. We use these terms interchangeably in this work. 

Isometric properties can also be translated into error correction properties. If a degree $N$ tensor is permutation invariant $\ell-$isometric, then it can also serve as an encoding isometry for an $[[n,k,d]]$ code where $n=N-k$, $k\geq 1$, and $d\geq \ell-k+1$ \cite{CaoLackey:2021}. This is because when \emph{any} $\ell-k$ leg subsystem is maximally mixed, the map onto a subsystem is also a code that corrects at least that many erasures. Therefore any operation that alters the logical information must have support over at least $\ell-k+1$ sites. Here we use this notation to denote a code over $n$ qudits each with size given by the bond dimension of the tensor.
 If the tensor is not permutation invariant, then the resulting code can still correct such erasure on the \emph{specific} $\ell-k$ legs required by the isometry. However, in this case no conclusive statement can be made about code distances in general.
 
 Conversely, it is also straightforward to find examples of tensors with such isometric properties. For instance, any non-degenerate code with $d=\ell+1$ satisfies the property that the reduced density matrix on $\ell$ qubits is maximally mixed. This yields a permutation invariant $\ell$-isometry. The tensor that corresponds to this code is at least a $k+\ell$-isometry.


\begin{definition}
An HMERA is \emph{regular} if it is built from a regular hyperbolic tessellation. If the tensors over different tiles in this network are identical and permutation-invariant, then we say it is \emph{completely regular}.
\end{definition}

Note that a (1-)uniform tiling can be regular or semi-regular. For example, the semi-regular pentagon-hexagon code (Figure 17a of \cite{Pastawski:2015qua}) and the hybrid holographic Bacon-Shor code\cite{Cao:2020ksw} are completely semi-regular HMERAs. 
Here we focus on regular tilings where all polygonal tiles are identical. A completely regular tensor network will have individual tensors that obey the symmetries of the tiling. A completely regular HMERA is then a tensor network constructed by gluing together tensors that are also manifestly erasure correction codes. The HaPPY pentagon code and the holographic Steane code\cite{Harris2018} are such completely regular HMERAs.

\begin{definition}
An HMERA is \emph{locally contractible} if each tensor contracts to an isometry.
\end{definition}
There are cases in which groups of tensors together contract to an isometry while each tensor as its own fundamental unit does not; see, for example, \cite{Evenbly2017}. If the (groups of) tensors in the network do not contract to isometries, then it becomes more difficult to contract the network exactly when one computes quantities such as expectation values. This often leads to a much more costly algorithm, although there are instances where approximate contractions can also be performed in polynomial time\cite{PEPS,Levin2007}.

\begin{lemma}
A completely regular HMERA is locally contractible only if it has tensors that are least 2-isometric.
\label{lm:l2}
\end{lemma}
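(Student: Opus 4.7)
My plan is to use a double-counting argument on the shared edges of the tessellation. Suppose the HMERA on a regular hyperbolic tessellation $\{p,q\}$ is locally contractible, so the identical permutation-invariant tile tensor $T$ of degree $N\geq p$ is $\ell$-isometric for some $\ell\geq 1$, and I must rule out $\ell = 1$.

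The central observation is that for local contractibility to persist under the composition of two adjacent tiles---which is the property making the whole network, and not merely an individual tensor, efficiently contractible---the leg shared between them must be designated as an output leg of one of the tiles and an input leg of the other. If instead the shared leg were labelled as an input on both sides, then contracting along it would give, after summing outputs, an expression of the form $\sum_e T^*(e,\vec a)\,T(e,\vec a')$, which is not forced to equal $\delta_{\vec a\vec a'}$ by the $\ell$-isometric property alone; the combined two-tile tensor would then fail to be an isometry. Because $T$ is permutation invariant, we may freely label any $\ell$ of its legs as inputs, so this requirement reduces to picking a consistent input/output orientation on every shared edge of the tessellation.

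Given such an orientation, let $\ell_E$ denote the number of edge legs of $T$ designated as inputs (the remaining $\ell-\ell_E$ inputs, if any, are bulk legs). Each tile then contributes $\ell_E$ input edge-ends and $p-\ell_E$ output edge-ends; since every tessellation edge has exactly one input end and one output end and the tessellation has $Np/2$ edges, a standard double count yields
\[
N\ell_E = Np/2 = N(p-\ell_E),
\]
which forces $\ell_E = p/2$. Hyperbolicity of $\{p,q\}$ requires $p\geq 3$: if $p$ is odd then $p/2$ is not an integer, no consistent orientation exists, and no completely regular locally contractible HMERA exists on such a tessellation (so the lemma holds vacuously); if $p$ is even then $p\geq 4$, and hence $\ell\geq \ell_E = p/2\geq 2$, as required.

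The hard part will be the orientation step: showing rigorously from the bare definition of local contractibility that a consistent input/output orientation on shared edges is genuinely forced. I would address this by making precise the calculation sketched above---namely, that a shared edge labelled input/input (or output/output) already prevents the two incident tiles from contracting to an isometry using only the $\ell$-isometric property of the common tile tensor, so local contractibility breaks down at the very first level of coarse-graining. This matches the authors' stated motivation for the definition, in which ``each tensor contracts to an isometry'' is meant precisely to exclude the hyper-invariant-style situation of \cite{Evenbly2017} where only groups of tensors do.
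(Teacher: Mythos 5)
Your approach breaks down at the double-counting step: you assume every tile uses the same number $\ell_E$ of its edge legs as inputs, but nothing in the definition of local contractibility forces this. The contraction that matters is the radial one dictated by the layered structure of the hyperbolic tiling: each tensor must act as an isometry from its \emph{inward-facing} legs (plus bulk legs) to its outward-facing legs, and the number of inward-facing edge legs is not uniform across tiles --- edge polygons have one, vertex polygons have two. A permutation-invariant $\ell$-isometric tensor can serve as a $j$-input isometry for any $j \le \ell$, so there is no obstruction to such an ``unbalanced'' orientation, and the identity $N\ell_E = N(p-\ell_E)$ has no basis. Its consequences are in fact contradicted by the paper's central example: the HaPPY pentagon code is a completely regular, locally contractible HMERA on the $\{5,4\}$ tiling with $p=5$ odd, so your claim that the lemma holds vacuously for odd $p$ (because no consistent orientation exists) is false; and for even $p$ your conclusion $\ell \ge p/2$ is far stronger than what local contractibility actually demands (for $\{6,4\}$ with no bulk legs, $2$-isometric already suffices, whereas you would require $3$).

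The paper's proof is much more local. In every layer of a regular hyperbolic tiling there is at least one ``vertex polygon,'' a tile with two edges facing inward (equivalently, the dual network contains loops). When the network is contracted from the boundary inward, the tensor on such a tile has all of its outward legs contracted against its conjugate and must yield the identity on the remaining two inward legs (plus any bulk legs); a tensor that is only $1$-isometric cannot do this, so the tensor must be at least $2$-isometric. If you want to retain a combinatorial flavor, the correct object to count is not inputs per tile but the loops in the dual graph: their existence is precisely what forces some tile to have at least two inward legs, which is the paper's argument.
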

\begin{proof}
For a given layer of the hyperbolic tiling, there is at least one tile which has two edges facing inwards whereas all others have one edge facing inwards. As a result, the dual network contains loops. A tensor on this tile will have at least two legs facing inwards. If there are only 1-isometries, then it is impossible to contract this tensor from the outside, thus breaking local contractibility.
\end{proof}
Note that it is possible to generate completely regular tree tensor networks where all tensors are identical 1-isometries that are locally contractible. This lemma also ensures that a 5 qubit code is the simplest tensor that can be used as a building block for a locally contractible and completely regular HMERA that is also a non-trivial QECC. This is because the quantum Singleton bound requires that $n-k\geq 2(d-1)$. Complete regularity along with 2-isometry now ensure that each tensor block has $d\geq 3$. To encode a non-trivial amount of logical information, we also need $k\geq 1$. Hence this puts $n\geq 4+k\geq 5$. Hence the simplest building block for such a HMERA is indeed the $[[5,1,3]]$ code.

\section{General Constraints and no-go theorem}\label{sec:general}

When considering the holographic code, a ``good'' non-degenerate QECC that corrects two or more erasure errors necessarily has a trivial connected 2-point function on the boundary. This is because in order for the erasures to be correctable, they must contain none of the encoded logical information. Therefore, for non-degenerate codes, the state $\rho_{AB}$ on the erased systems $A,B$ on the boundary must be maximally mixed. Such states have zero mutual information between the two erasures. Hence, for unit norm operators $O_A, O_B$ acting on $A, B$ respectively,

\begin{equation}
   \frac 1 2 (\langle O_A O_B\rangle-\langle O_A \rangle \langle O_B\rangle)^2 \leq I(A:B) = 0.
\end{equation}
This is, for example, the case for most two-site combinations in the HaPPY code. Therefore, to produce non-trivial correlation functions, one would have to consider a ``bad'' code with properties that are usually undesirable for quantum error correction. Additionally, it is difficult for it to properly capture the correct behaviour of subregion duality in holography.

This is not to say that it is impossible to produce a useful holographic code with non-trivial correlations. For example, degenerate codes may still sustain non-trivial two point functions while having the same code distance as a non-degenerate code. That is, let $\{|\bar{i}\rangle\}$ be an orthonormal basis for the code subspace; treating $O_A,O_B$ as two single-site errors, the Knill-Laflamme condition is still satisfied, 
\begin{equation}
    \langle \bar{i}|O_A O_B|\bar{j}\rangle = C_{AB}\delta_{ij},
\end{equation}
as long as $C_{AB}$ has off-diagonal elements while terms like $\langle\bar{i}|O_A|\bar{i}\rangle$ can vanish. Such is the case for the Shor code\cite{PhysRevA.52.R2493} but not for the $[[5,1,3]]$ code.

Another option is to consider approximate quantum error correction codes (AQECC), where the Knill-Laflamme condition is only satisfied approximately. This is a more natural approach because AdS/CFT should be described by an AQECC when the gravitational coupling is finite\cite{Cotler:2017erl,Hayden:2018khn}. In this case, it is possible to have a code that has ``bad'' erasure correction properties in general, but is nevertheless sufficiently close to a ``good'' code, such that it reproduces the desired behaviours like entanglement wedge reconstruction to leading order while also supporting correlation functions which become nontrivial as a result of $1/N$ corrections. 

It might seem straightforward to construct such (A)QECCs using the same techniques of \cite{Cao:2020ksw}, where one replaces a code by its ``noisy'' counterparts. However, it is more difficult to maintain both non-trivial correlation functions and exact contractibility at the same time. On the one hand, exact contractibility requires (Lemma~\ref{lm:l2}) that the tensor be at least 2-isometric, meaning that certain two-site subsystems need to be maximally mixed. Heuristically, this very property makes it harder for the tensor network to sustain non-trivial two-point functions. On the other hand, adding ``noisy'' terms to the tensor network tends to spoil said isometric properties, making it easier to support non-trivial correlations but harder to contract exactly. 
Therefore, the challenge to building an exactly contractible (A)QECC model with non-trivial correlations is in balancing these two opposing forces such that each tensor is just good enough an error correction code, or, more generally, isometry, to give exact contractibility, but not so good as to require trivial correlation functions on the boundary.

The tension between contractibility and non-trivial correlation function in these holographic codes can be summarized in the following theorem. The simplest efficiently contractible construction which captures the discrete symmetries of the hyperbolic space is a locally contractible regular HMERA.

\begin{theorem}\label{nogo}
    A locally contractible and completely regular HMERA always contains a trivial connected boundary two-point function.
\end{theorem}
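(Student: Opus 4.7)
The plan is to exhibit an explicit pair of boundary sites whose reduced density matrix must be maximally mixed, from which triviality of the connected two-point function follows immediately. First I combine Lemma~\ref{lm:l2} with the permutation invariance guaranteed by complete regularity: every tile tensor $T$ is a permutation-invariant $\ell$-isometry with $\ell\geq 2$, so viewing $|T\rangle$ as a pure state on its $N$ legs, the reduced density matrix on any two of them is $I/d^2$. I would then select a pair of adjacent boundary sites $A,B$ sharing a single boundary tile $T_0$; such a pair exists because at least one outermost tile of a uniform hyperbolic tessellation contributes more than one edge to the boundary cut.

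Next, I would express the full network state as $T_0$ contracted with the remaining tensor network $R$ and the chosen bulk input, then split the legs of $T_0$ into the outgoing pair $\{A,B\}$ and the remaining ``environment'' legs $E$. Tracing out all boundary legs other than $A,B$ along with the bulk yields
\begin{equation}
    \rho_{AB} \;=\; T_0\,\rho_E\,T_0^\dagger,
\end{equation}
where $\rho_E$ is the reduced density matrix induced on $E$ by $R$ together with the bulk input. The $2$-isometric property of $T_0$, in the form $T_0 T_0^\dagger = I_{AB}$ when the legs in $E$ are treated as the input side, then turns $\rho_E\propto I_E$ into $\rho_{AB}=I_{AB}/d^2$. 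Maximal mixedness of $\rho_{AB}$ immediately yields $\langle O_A O_B\rangle = \langle O_A\rangle\,\langle O_B\rangle$ for any single-site operators $O_A,O_B$, so the connected two-point function vanishes.

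The hard part will be establishing $\rho_E\propto I_E$. The argument I would use is an inductive, layer-by-layer contraction of the network inward from the boundary: because each inner tile is itself permutation-invariant and $2$-isometric, iteratively contracting tensors with their conjugates collapses every outgoing interface to an identity, provided that what feeds in from the deeper layers is sufficiently symmetric. The recursion goes through cleanly for a maximally mixed bulk input; for a generic pure bulk state one instead passes to a purification of the bulk degrees of freedom and uses the discrete symmetries of the regular tiling to argue that the induced environment state on $E$ remains isotropic. Handling the edge cases --- boundary tiles whose environment mixes further boundary, inner, and bulk edges, together with the innermost tile of any finite truncation --- is where I expect the recursion to require the most care, and is also the place where the assumption of complete regularity, rather than just local contractibility, is likely to be essential.
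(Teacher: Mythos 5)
Your strategy --- exhibit a pair of boundary legs $A,B$ on a single outermost tile and show $\rho_{AB}\propto I$, then invoke the mutual-information bound --- is genuinely different from the paper's proof, which never computes a reduced density matrix: it instead inserts a traceless operator on a boundary leg and shows that the local ascending superoperator of an edge polygon annihilates it (a purely local consequence of the $2$-isometry), then uses the EP/VP combinatorics of the layered tiling to show every insertion eventually passes through such a tensor. Your route would be attractive if it closed, but the step you yourself flag as ``the hard part,'' $\rho_E\propto I_E$, is where it breaks, and not merely for technical reasons. First, $E$ contains the bulk (logical) leg of $T_0$, which is contracted with a fixed, generically pure input state; the marginal of the environment on that leg is therefore pure, never proportional to the identity, and passing to a purification does not change this. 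What you actually need is weaker but still nontrivial: the cut legs carry a maximally mixed state decoupled from everything else, \emph{and} the two-site marginal of an arbitrary codeword of $T_0$'s code is maximally mixed. The latter is a $d\geq 3$ statement (erasure of the two physical legs $A,B$ must be decoupled from the logical reference), which does not follow from the $2$-isometry guaranteed by Lemma~\ref{lm:l2} alone --- that only gives $d\geq 2$, i.e.\ maximally mixed \emph{one}-site marginals of codewords.

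Second, the induction you propose runs ``inward from the boundary,'' but the isometric structure of a locally contractible HMERA points the other way: contracting a tensor with its conjugate over its \emph{outward} legs yields the identity on its \emph{inward} legs, which is what makes norms computable, whereas you need the reduced state that the complement network delivers on an \emph{outward}-facing leg of $T_0$'s parent. That statement must be propagated outward from the top tensor, whose in-plane legs are all outputs fed by a pure logical state --- precisely where two-site marginals can fail to be maximally mixed for a $d=2$ code. Moreover the dual network contains loops (as the proof of Lemma~\ref{lm:l2} notes): the two cut legs of a vertex polygon attach to two distinct parents that are themselves connected through deeper layers, so the Gram matrix on several cut legs does not factorize layer by layer and the ``isotropy from discrete symmetry'' appeal is not enough to control it. To repair the argument you would either have to strengthen the hypothesis to a $3$-isometry (counting the bulk leg) and prove an entanglement-wedge-type statement that the complement of one tile is an isometry from its cut and bulk legs to the remaining boundary, or abandon the density-matrix route in favor of the operator-pushing argument the paper uses, which only requires the local annihilation property tensor by tensor.
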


\begin{hproof}
By following the tensors from the interior out to the boundary, we can re-organize the network into different layers of tensors in a self-similar network which contains loops. Then one- and two-point functions can be computed by inserting an operator on the boundary and contracting the tensor network. The operator contracted with tensors in each layer is a coarse-graining operation as it moves up the layers, similar to conventional MERA ascending super-operators. However, because of the error correction properties of the tensors that make up the network, some of these operators will eventually be coarse-grained into zero or the identity operator. Hence the connected two-point correlation function vanishes.
\end{hproof}

In other words, if one constructs a tensor network out of a single type of code that corrects erasures exactly and demands it to be efficiently contractible, then some correlation functions of this network must be trivial. The details of the proof can be found in Appendix~\ref{app:proof}. 
Indeed, we see that this is true for both the hyper-invariant tensor network and MERA where at least two types of tensors are used in its construction. This is also true for the pentagon and the heptagon code. In fact, we see that there are trivial correlation functions in \cite{Jahn2020} even for localized two-point function and that the (pairs of) sites with non-trivial correlators are sparse in the infinite-layer limit\cite{Gesteau:2020hoz}. 

Note that here we have only focused on locally contractible completely regular HMERAs. It is in general possible to allow a completely regular HMERA to still be efficiently contractible by taking groups of tensors to contract to an isometry. We call such tensor networks quasi-locally contractible. 
One may worry that the local contractibility is too restrictive a requirement, and that there may be quasi-locally contractible networks that are completely regular. However, if a completely regular HMERA is quasi-locally contractible, then we can group the tensors into isometries. If the grouping produces another uniform tiling of the hyperbolic space, then it is simply a locally contractible uniform HMERA but with a different component tensor. By the above theorem it cannot have non-trivial correlations. If the network admits grouping of tensors into different isometries which correspond to a non-regular tessellation of the hyperbolic plane, then it is equivalent to an HMERA with multiple component tensors similar to MERA or the hyper-invariant tensor network, which are not completely regular.

The intuition of the above result is essentially that one has put too much ``isometric-ness'' into the network. By requiring the tensor network to be completely regular we force each tensor to be permutation invariant. This then forces every two-site subsystem on each tensor to be maximally mixed. As we have mentioned above, such properties prevent a non-trivial two-point function. It is clear that we then need to dial down the amount of ``isometric-ness'' in the network to avoid this no-go result. 

One can achieve this by removing different restricting clauses in the theorem while still having a satisfactory HMERA. For instance, this is possible by relaxing a completely regular network to a regular one. In doing so, we may relinquish some of the local symmetries of the tensor by not requiring permutation invariance. Such tensors can only be isometries when contracted in certain directions (e.g. the isometries in MERA). Alternatively, we can allow more than one type of tensor in the network while still demanding that all types have the same degree. For such kind of networks, one can selectively reduce the ``isometric-ness'' of some tensors such that we still maintain local contractibility. Of course, we also give up some symmetries in the network. Or, one can take a combination of these two approaches. The latter will be the approach we take in the next section. 

Going beyond regular and locally contractible HMERA, one can give up regularity entirely by going to semi-regular or $k$-uniform tessellations where different shaped tiles can map to different tensors. Local contractibility can also be relaxed to quasi-local contractibility, \textit{e.g.}, in the hyper-invariant tensor network, while still retaining an exactly contractible ansatz.



\section{HMERA model}\label{sec:model}
\subsection{General Construction Guidelines}
In this section, we consider a construction that uses more than one type of tensors while preserving a regular tessellation. 
Although it can be tricky to prepare a tensor network with desired properties using only one type of tensor, it is much easier with two or more. 

Let us begin with a regular tiling of the hyperbolic plane with Schlafli symbol $\{p,q\}$. We choose $p,q$ such that the suitable isometries which we will describe in more detail later exist. In addition, all tiles except the one at the center can be divided into two types: the polygons with one edge facing the center, or edge polygons (EP), and polygons with a vertex/two edges facing the center, or vertex polygons (VP)\footnote{This construction method will not apply to the cases where some polygons are neither EP nor VP, e.g. when $q$ is odd. However, for $q\geq 5$, one can circumvent this difficulty by preferentially designating such polygons to be either VPs or EPs. For $q=3$ the situation is much worse and it is unclear if such a simple modification is sufficient. }.

Correspondingly, we need two types of isometries, the $1(+k_1)$-isometries of degree $p+k_1$, which we assign to the EPs, and $2(+k_2)$-isometries of degree $p+k_2$, which are assigned to the VPs. For the top tensor living in the central tile, we assign a $k_0$-isometry which encodes $k_0$ qudits into $p$ qudits. From the error correction perspective, the top tensor does not exactly correct any erasure errors, but it may correct them approximately. By construction, we will assume that $p,q$ are suitably chosen such that these types of isometries exist (we will give an example of such a choice in the next paragraph). $k_0,k_1,k_2\geq 0$ are the number of ``bulk'' or logical degrees of freedom we want to assign to each tile. For the tensor network to be a non-trivial code, we want $k_0+k_1+k_2>0$. We then orient the remaining $p$ tensor legs such that each leg is perpendicular to the polygon edges. The bulk legs will be left uncontracted while the two in-plane legs of each tensors that lie on the same edge of a polygon will be contracted through the usual tensor contraction procedure\cite{Orus:2013kga,Pastawski:2015qua,Cao:2020ksw}. This then produces a tensor network that maps the bulk degrees of freedom onto the boundary degrees of freedom. For the sake of convenience, when we refer to isometries from now on we will automatically drop the $(+k)$ bulk degrees of freedom and only consider in-plane legs unless otherwise specified. 

For a slightly more concrete example, consider the $\{5,4\}$ tiling of the hyperbolic plane by pentagons in Figure~\ref{fig:HMERA model}, where the VPs are labelled by squares and EPs are labelled by disks. The top tensor, a (0(+1))-isometry of degree $6$, is denoted by a pentagon. Then for each VP, we can assign a (2(+1))-isometry and for each EP a (1(+1))-isometry, both of degree $6$. This will allocate $k_0=k_1=k_2=1$ bulk qubit degree of freedom for each pentagonal tile. Such isometries clearly exist. One example is to take a $[[5,1,3]]$ code for the VP and a $[[5,1,2]]$ code for the EP. For the top tensor, more generic encoding isometries $V:\mathbb{C}^2 \rightarrow (\mathbb{C}^2)^{\otimes 5}$ would suffice. For example, we can turn a non-additive $[[6,0,2]]$ code into a $[[5,1]]$ code which does not correct any erasure error by taking one of the tensor legs to be the bulk leg. Of course, while these isometries satisfy the necessary conditions we outlined, they need not be sufficient.

The error-correction properties of the code, e.g. the physical representations of logical operators, can also be easily derived using operator pushing. See \cite{Pastawski:2015qua,Cao:2020ksw} for details. The pushing rules follow from the isometric properties of the tensors, where an $\ell$-isometry of degree $p$ can push any operator supported on $\ell$ legs to the remaining $p-\ell$ legs. Therefore, the support of each logical operators can also be derived by following these local pushing rules in the tensor network. See Figure~\ref{fig:HMERA_push} for an example in the $\{5,4\}$ tiling where the relevant isometries are given by the $[[5,1,2]]$ and $[[5,1,3]]$ stabilizer codes\footnote{The operator pushing in Figure~\ref{fig:HMERA_push} also holds for the approximate code we construct in Section~\ref{explicitconstruction}. However, in Figure~\ref{fig:HMERA_push}b, only the logical Z operator can be supported on 4 legs of the imperfect code.}. 

\begin{figure*}[t!]
    \centering
    \begin{subfigure}[t]{0.5\textwidth}
        \centering
        \includegraphics[height=2.8in]{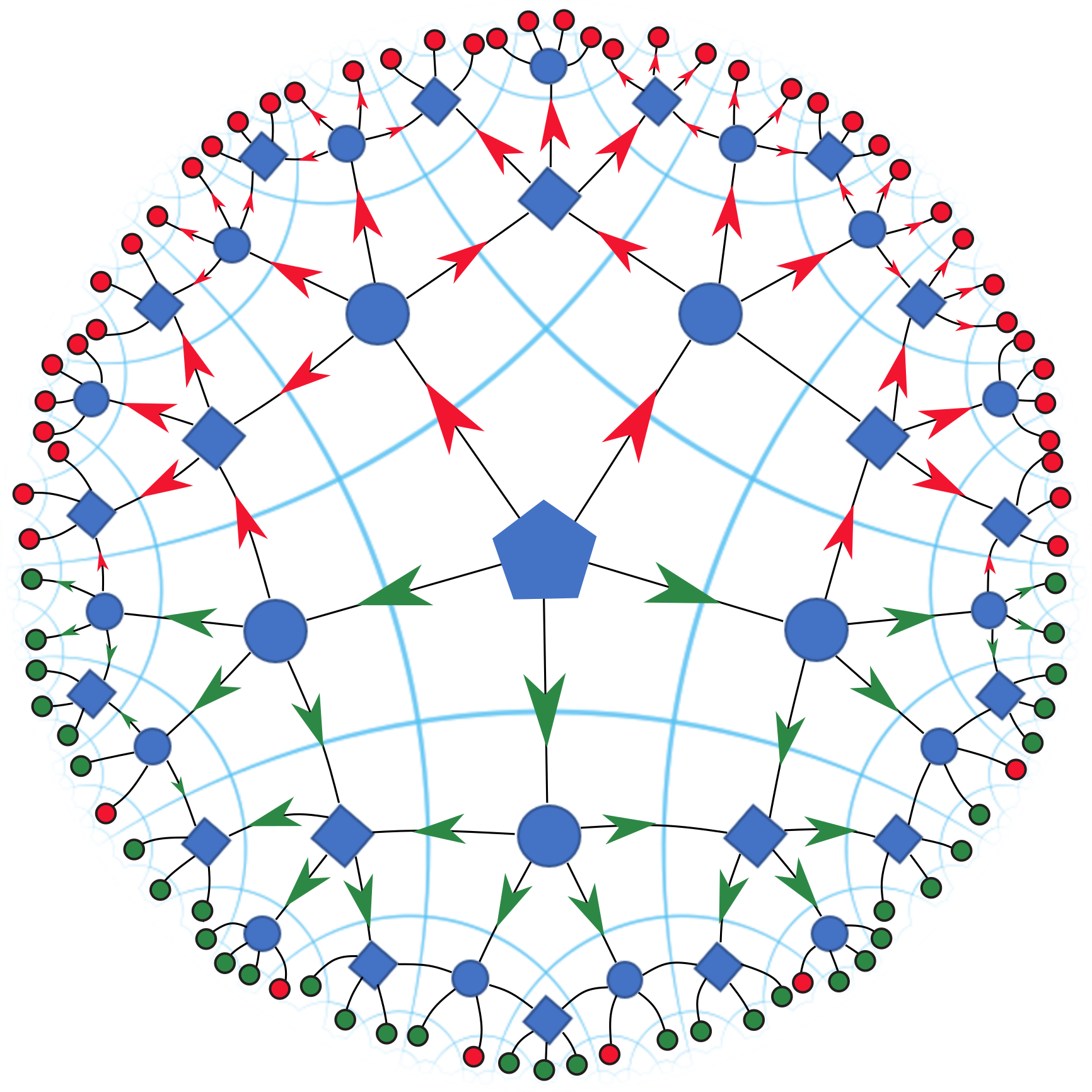}
        \caption{}
    \end{subfigure}%
    ~ 
    \begin{subfigure}[t]{0.5\textwidth}
        \centering
        \includegraphics[height=2.8in]{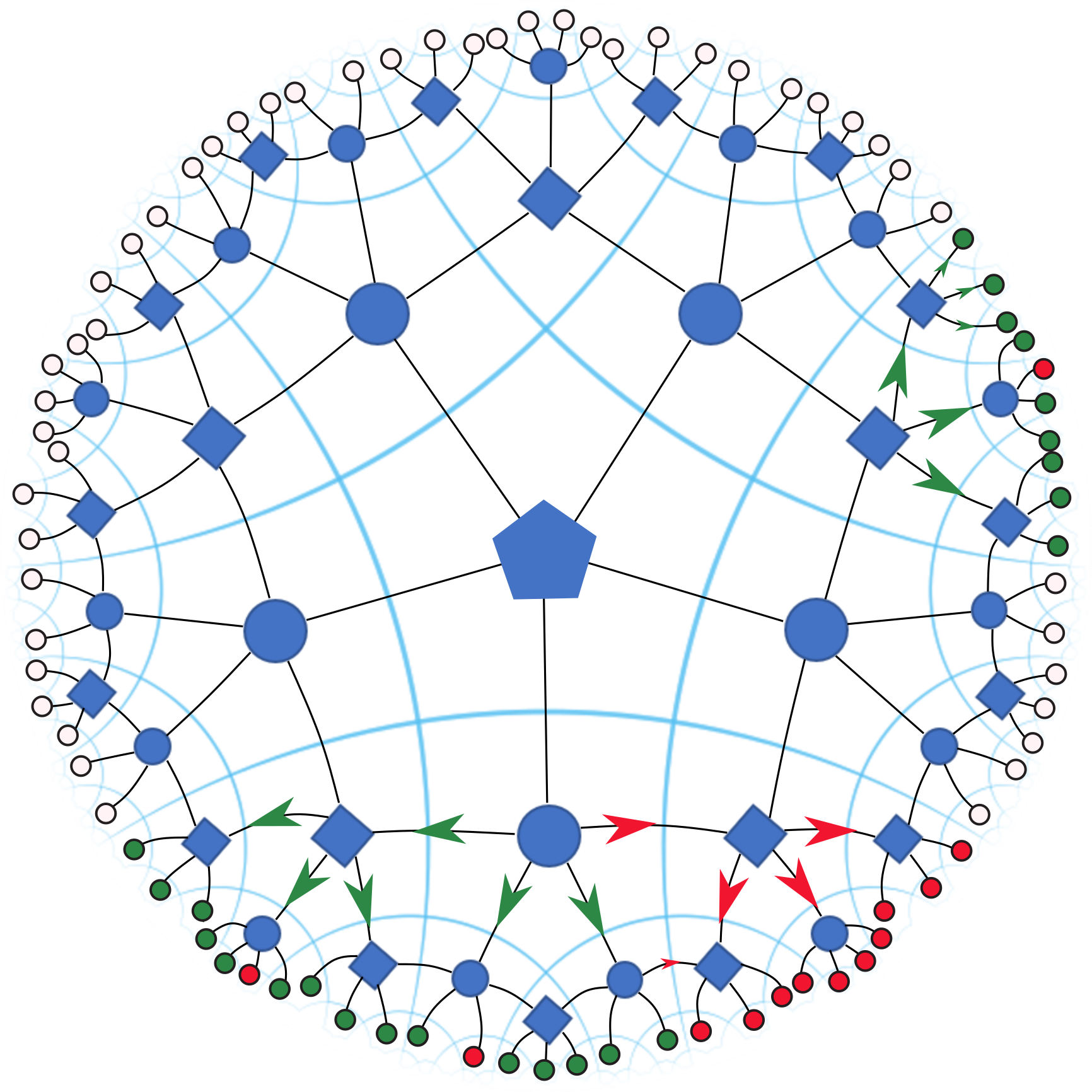}
        \caption{}
    \end{subfigure}
    \caption{(a) A logical operator on the central (top) tensor is supported on all boundary degrees of freedom (red and green nodes). However, if one constructs a code that approximates the HaPPY code, then a logical operator can be pushed along the green arrows to a subregion of the boundary (green nodes) with a small error. (b) A similar pushing for certain logical operators closer to the boundary produce operators that are supported on a subregion. The logical operator can be reconstructed approximately on green nodes and can be reconstructed exactly on the union of green and red nodes.}
    \label{fig:HMERA_push}
\end{figure*}


Conversely, if all logical operators of some bulk region can be ``cleaned'' off of a boundary subregion, \textit{i.e.} there exists a representation of the logical operators such that they act trivially on said subregion, then the erasure of this subregion also does not affect the encoded logical information in that bulk region. Specific code properties will, of course, depend on the details of the isometries and the network one uses.

Furthermore, because these isometries are QECCs that correct no more than $e$ erasure errors, where $e=1$ for EPs and $e=2$ for VPs, they will not (exactly) correct any erasure errors on the boundary. Equivalently, this implies that an inserted weight one operator (such as a Pauli error) on the boundary will generally coarse-grain to a non-trivial (not zero or identity) operator on the more coarse-grained layers. This allows us to define non-trivial ascending/descending superoperators in a way similar to that of the MERA\cite{MERAalgo}. These are mappings that take operators to operators, corresponding to coarse/fine-graining a particular operator along the renormalization group direction. However, a key difference is that these superoperators can be different on different layers. Therefore we only take them to be similar to their MERA counterparts in an average sense. Suppose such a coarse-graining (ascending) superoperator averaged over the layers is given by $\bar{S}(\cdot)$ which admits spectral decomposition with eigenvalues $\bar{\lambda}_{\alpha}$ and corresponding eigen-operators $\bar{\psi}_{\alpha}$,
then, we can reuse the same argument as is used to derive the MERA power-law correlation functions. See Section 3 of\cite{MERAalgo} for example. 

Consider inserting operators $\bar{\psi}_{\alpha},\bar{\psi}_{\beta}$ on the boundary sites indexed by $i,j$. Then the correlation function is expected to behave as

\begin{equation}\label{eqn:correlationinTN}
    \langle \bar{\psi}_{\alpha}(i) \bar{\psi}_{\beta}(j)\rangle \approx C_{\alpha\beta}\bar{\lambda}^l_{\alpha}\bar{\lambda}_{\beta}^l= \frac{C_{\alpha\beta}}{|i-j|^{\Delta_{\alpha}+\Delta_{\beta}}}.
\end{equation}
Here $C_{\alpha\beta}=\Tr[\bar{\Psi}(\alpha,\beta)\bar{\rho}]$ is the expectation value of a localized coarse-grained operator $\bar{\Psi}(\alpha,\beta)$ evaluated against the reduced state $\bar{\rho}$ supported on a few sites which is approximately a fixed point\footnote{The existence of a fixed point is an additional assumption we make for the model. It seems physically reasonable that, for such a tensor network with a large number of layers with the same average ascending/descending superoperators, there should be a fixed point.} of the coarse-graining $\bar{S}(\bar{\rho})\approx\bar{\rho}$, and under its dual fine-graining descending superoperator $\bar{S}^*(\bar{\rho})\approx\bar{\rho}$. Equivalently, one can keep coarse-graining the operators $\bar{\Psi}(\alpha,\beta)$ using the ascending superoperators all the way up to the top tensor then evaluate $C_{\alpha\beta}$ against the top tensor\footnote{The top tensor may encode global information of the state\cite{MERAtopological}, but its impact on a few-site reduced state is washed out by the ascending/descending superoperators, so we expect that (\ref{eqn:correlationinTN}) should not depend on the choice of top tensor. }. $l\sim \log|i-j|$ is the number of layers of coarse-graining needed before $\bar{\psi}_{\alpha}\bar{\psi}_{\beta}$ becomes localized, and  $\Delta_{\alpha}=\log\bar{\lambda}_{\alpha}$. The specifics of these super-operators will depend on both the network structure and the tensor isometries we use. 

Note that this heuristic argument, by itself, does not imply the model can be used to approximate the ground states of CFTs whereby $\Delta$ are fit to the correct primary operators. Our statement is simply that such networks should support power-law correlations, as opposed to other completely regular HMERAs which Theorem \ref{nogo} ensures do not.

Thus far, the method described constructs a quantum code which, in general, does not correct for erasures of subregions. In addition, the logical operators are not necessarily represented transversally. While such properties are not desirable for fault-tolerance, the latter is not a concern for models of AdS/CFT\footnote{In fact, it would be extremely unusual if all bulk operators can be represented transversally on the boundary, i.e. that they are simple tensor products of boundary operators acting on disjoint subregions, because the smearing function is highly non-trivial.}.
As for the former, holographic codes in general should correct for erasures of boundary subregions, at least approximately, in the large $N$ limit. Therefore, if one wants more similarity with holography, it is not enough for such tensor networks to simply be a ``bad code'', rather it also has to be ``close'' to a good holographic code with proper erasure correction properties like \cite{Pastawski:2015qua}. To this end, we have to take a bit more care in constructing the isometric tensors such that they are also close to good erasure correction codes with larger code distances. 

\subsection{An Explicit Construction}\label{explicitconstruction}
We now present a concrete example of HMERA that is both efficiently contractible and permits non-trivial boundary correlations on all sites. We will construct isometries using a tunable parameter $\theta$ such that the code reduces to two copies of the $[[5,1,3]$ ``perfect'' code when $\theta=0$. It well approximates the HaPPY code when $\theta$ small, but can now produce power-law decaying correlations and can in principle sustain a non-flat entanglement spectrum. We base our model on a tensor product of two copies of the HaPPY pentagon code. To circumvent the no-go theorem \ref{nogo}, we substitute some of the perfect codes in the network with imperfect codes which we now construct.

First we construct a $1(+0)$-isometry tensor associated with the edge polygons which we call the imperfect code.
The imperfect code is illustrated graphically in Figure \ref{fig:imperfect tensor illustration}. This is but one way of constructing such isometries which are manifestly close to a perfect code for some parameters. In practice, one can easily construct other tensors over larger bond dimensions with more variational parameters. However, we choose this simple construction for the sake of clarity. 

The imperfect code is a superposition of the double copy perfect tensor together with weight-2 Z-type errors inserted\footnote{The added terms with Z errors help tilt the perfect code away from a stabilizer code. In doing so, it adds magic, which is necessary at all scales of the tensor network to produce a low energy state of a CFT\cite{White:2020zoz}.}. A weight-1 Z error is inserted in each copy, but no two errors are inserted on the same leg of different copies. Here $i,j=1,2,...,5$ are labelling legs in which errors are inserted. All different possible insertion configurations are summed up. To have this imperfect tensor be a 1-isometry (see  Definition \ref{def:k-isometry}), the parameters should satisfy the following relation:
\begin{equation}
    \cos\theta^2+\sum_{i\neq j} \sin\theta_{ij}^2=1.
\end{equation}
To retain the original symmetry of the perfect tensor, we usually choose the parameterization $\sin\theta_{ij}=\frac{\sin\theta}{\sqrt{20}}$. Both the perfect and the imperfect tensor being isometries guarantees the efficiently contractible property of our HMERA model.

Treating each leg as a qudit, we can show that this imperfect tensor defines a $[[5,0,2]]_4$ code but approximates two copies of the perfect code (Appendix \ref{app:imptensor}). It exactly reduces to two copies of the 5 qubit perfect code when $\theta=0$, and therefore inherits all its error correction properties approximately when $\theta$ is small. In the tensor network notation, each leg has bond dimension $4$ and the code sub-algebra is supported on any three legs of this tensor when $\theta=0$. However, when $\theta$ small but nonzero, it only approximates the double-copy perfect code. 

\begin{figure}
    \centering
    \includegraphics[width=0.75\textwidth]{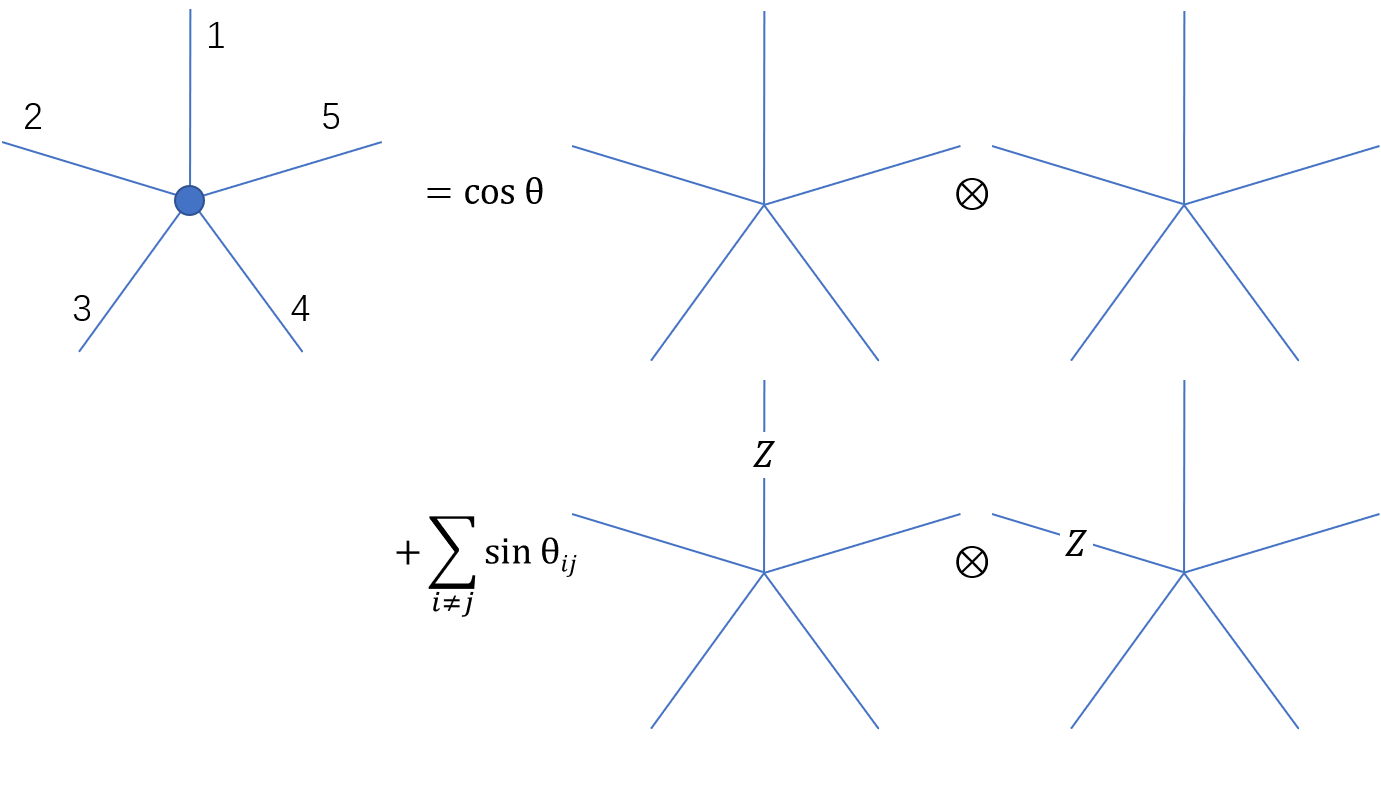}
    \caption{The construction of the imperfect code where each line on the left hand side represents two physical qubits. Each 5-pronged diagram on the right hand side is a perfect code. In the second line $i,j$ label the legs in which Z errors are inserted on each copy respectively.}
    \label{fig:imperfect tensor illustration}
\end{figure}

We can also construct the top tensor in a similar fashion by super-imposing the same codes with different choices of the code subspace. For instance, let $V$ be the perfect code encoding isometry, then we may construct a tensor that approximates the perfect code for small $\phi_{i\ne 0}$ with encoding map $V_T$ such that,

\begin{equation}
    V_T = \cos\phi_0 V +\sum_{i=1}^5 \sin \phi_i P_i V
    \label{eqn:toptensor}
\end{equation}
where $\cos^2 \phi_0 +\sum_{i}\sin^2 \phi_i =1$ and $P_i \in \{X,Y,Z\}$ are the same Pauli operators acting on site $i$. For the sake of symmetry, let us choose $\phi_i=\phi_0=\phi$. We can then take two tensor copies of this code as the top tensor we use in the network. 

To construct the full tensor network, we specify the positions of the substitution and replace those perfect codes with their imperfect counterparts. This is explicitly shown in Figure \ref{fig:HMERA model}. First we choose a pentagon as the origin and replace it by an imperfect double copy top tensor~(\ref{eqn:toptensor}). From the origin we can label each pentagon uniquely by the number of edges we need to cross to reach it from the origin. The collection of pentagons that are assigned to the same number a is called a ``layer''.  Apart from the top tensor, if a particular pentagon tile has only one edge connected to the previous layer, we replace the two tensor copies of the perfect code on it by the imperfect code~(Figure~\ref{fig:imperfect tensor illustration}). 

\begin{figure}
    \centering
    \includegraphics[width=0.5\textwidth]{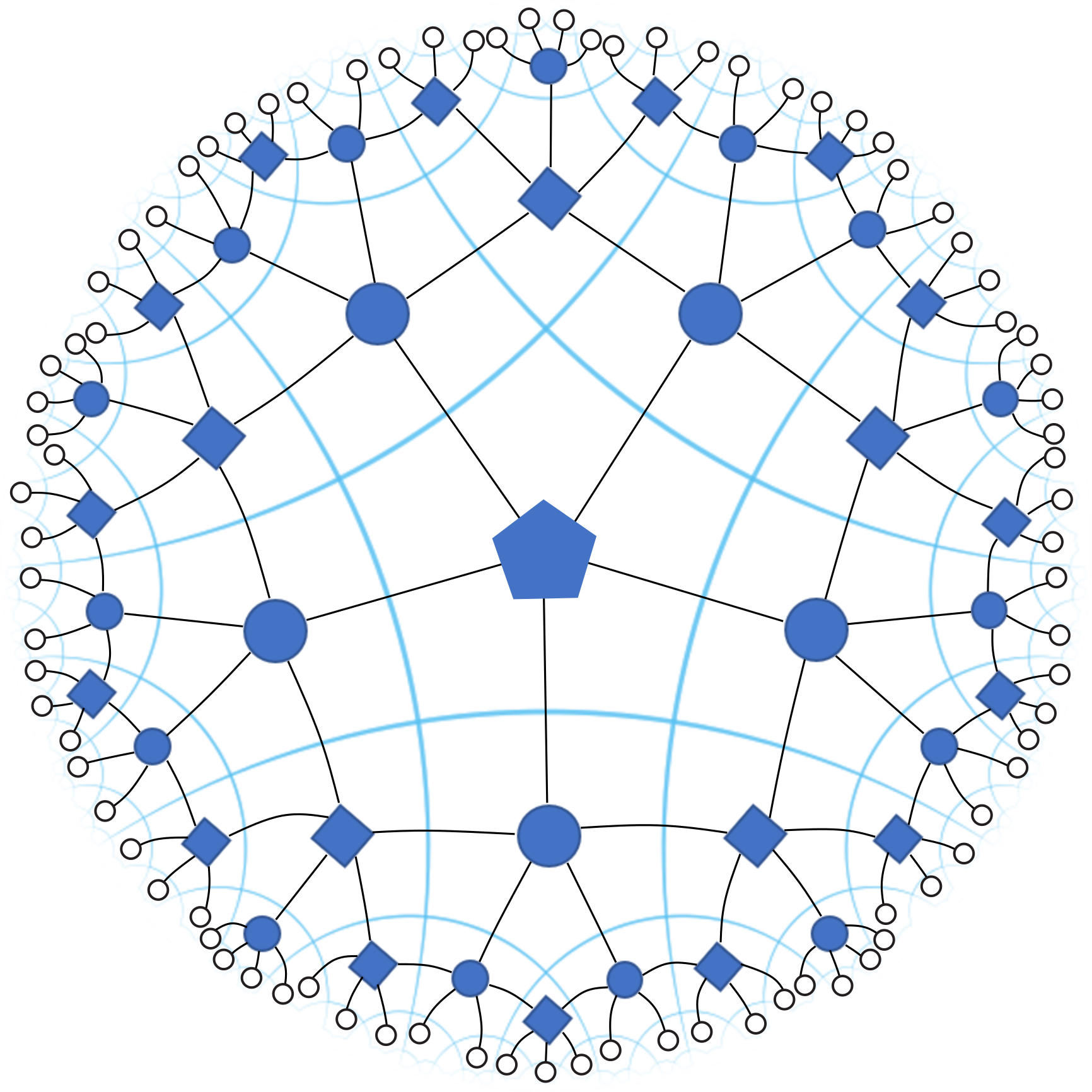}
    \caption{The illustration of the HMERA tensor network. The first several layers of substitution is explicitly shown. The circles represent the imperfect tensors defined in Figure \ref{fig:imperfect tensor illustration}, while the squares represent the tensor product of two perfect tensors.}
    \label{fig:HMERA model}
\end{figure}



The variational parameters of this model are given by the logical degrees of freedom, the top tensor skewing parameter $\phi$, and $\theta_{\tau}$ where in principle each imperfect code is labelled by $\tau$. For a more symmetric construction, we can set $\theta_{\tau}=\theta, \forall \tau$.

This network again supports a power-law decaying correlation function. Heuristically, it should contain at least one type of scaling operator with $\Delta \approx \log \bar{\lambda}$, where $\bar{\lambda}$ as the average of the dominant eigenvalue of the coarse-graining super operators over the layers:
\begin{equation}\label{eqn:averageeigenvalue}
    \bar{\lambda} \approx \prod_a \lambda_a^{p_a}.
\end{equation}
The index $a$ labels different types of super operators and $p_a$ denotes the probability of type $a$ super operator contributing to the coarse-graining process. It can be shown that one can choose the variational parameters such that $\lambda_a<1$. The detail distributions of $p_a$ and the details on the different superoperators can be found in Appendix~\ref{app:details}.

This HMERA model effectively ``interpolates'' between the usual MERA (or hyper-invariant tensor network model) and the HaPPY code model of holographic quantum error correction codes. By tuning these variational parameters, it mimics the former by introducing some variational parameters, power-law correlation functions while reproduces the approximate code properties of latter in reconstructing bulk operators on the boundary. Furthermore, because it is an approximate holographic quantum error correction code for small $\theta,\phi$, it also approximately retains all properties of the HaPPY code such as the Ryu-Takayanagi formula, entanglement wedge reconstruction, and other error correction properties.

\section{Discussion}\label{sec:discussion}
In this work, we have examined some general properties needed to construct a holographic quantum error correction code that can support power-law decaying correlation functions. By also demanding the network to be locally contractible, we found that it is impossible to satisfy both these requirements simultaneously in a completely regular HMERA. 
Instead, one has to introduce more than one type of tensors in order to construct a satisfactory variational ansatz that is also exactly contractible. This statement also coincide with the general observation that so far all the MERA-like variational ansatze presented contain at least two types of tensors. 

We have also provided general guidelines for constructing approximate holographic quantum error correction codes with the aforementioned properties. In particular, we gave one explicit construction where the tensor network approximated two copies of the HaPPY pentagon code in certain regimes of the parameter space. 

Although our proposed HMERA network can in principle serve as a variational ansatz, much work is needed to establish its utility. In particular, one would need to develop an optimization algorithm and analyze its complexity similar to \cite{MERAalgo}. We will leave this for future work. 

It is also desirable, both from AdS/CFT and from many-body physics, for such an ansatz to capture the CFT entanglement spectrum. We have not investigated this in detail, as the aforementioned algorithm and the associated numerics are not yet available. However, we can provide some general speculations using heuristic arguments. As shown in Appendix~\ref{subsec:entspec}, cuts through any edge of a perfect tensor have a tendency to flatten the entanglement spectrum while cuts through more than one edge of the imperfect tensors be the opposite. As the entanglement of a subregion is tied to minimal cuts through the tensor network, for large enough regions, we expect such a cut to contain both types of edges after averaging over the different types of tensors that can be contained in the subregion's past domain of dependence. Therefore, the overall entanglement spectrum should be somewhere in between the behaviour shown in Figure~\ref{fig:density matrix 3 type 1}
 and Figure~\ref{fig:density matrix 3 type 2}. Thus the tensor network should be able to accommodate non-flat spectrums.
 
We are optimistic that such methods can be generalized to create other HMERA-like variational ansatze which are also QECCs. In particular, because AdS/CFT is tied to AQECCs, we hope that such HMERA models can provide computable examples that help us understand some aspects of holography beyond leading order. They may also serve as useful tools for intuition building for small $N$, where the tensor network remains valid and well-defined despite being in the regime where gravity is strongly coupled and highly quantum.

\section*{Acknowledgements}
We thank Brian Swingle for helpful discussions and comments. C.C. acknowledges the support by the U.S. Department of Defense and NIST through the Hartree Postdoctoral Fellowship at QuICS, by the Simons Foundation as part of the It From Qubit Collaboration, and by the DOE Office of Science, Office of High Energy Physics, through the grant DE-SC0019380. J.P.
is supported by the Simons Foundation through \emph{It from Qubit:
Simons Collaboration on Quantum Fields, Gravity, and Information}. Y.W. is supported in part by the U.S. Department of Energy, Office of Science, Office of Advanced Scientific Computing Research, Accelerated Research in Quantum Computing (FAR-QC) and by the Air Force Office of Scientific Research under award number FA9550-19-1-0360.
\begin{appendix}

\section{Proof of the No-go theorem}\label{app:proof}
\begin{proof}
Consider a regular tiling of the Poincare disk with Schlafli symbol $\{p,q\}$, which tiles the plane with polygons of $p$ edges with each vertex of the polygon adjacent to $q$ $p$-gons. 

Let us tile the space as follows. We start by having the central tile (top layer) then gradually add more tiles layer by layer by radiating outwards. The first layer consists of the $p$ polygons whose edges are immediately adjacent to the central tile. Of the polygon in a layer, those that have edges facing toward the center we call \textit{edge polygons} (EPs). We then repeat the process by adding more EPs for the polygons in the outer layers. Finally, we finish by adding polygons that have a vertex facing the center which fills the gap between EPs, which we call the \textit{vertex polygons} (VPs). They have two edges facing inwards. If the tiles in an outer layer are immediately adjacent to a tile in the inner layer, then outer tiles are the``descendants'' of the inner tile, which is the ``ancestor.'' 

To construct a completely regular HMERA, for each tile we now place the same tensor with $p$ in-plane legs on the centroid of each polygon such that its legs are perpendicular to the $p$ polygon edges. For two polygons that share a common edge, the two tensor legs that cross this edge are contracted. By lemma~\ref{lm:l2}, the tensors are also at least 2-isometric for the network to be locally contractible.

Case 1 ($q>3$) :
For $q>3$, no two adjacent polygons in the same layer can have the same ancestor. Thus for $q>3$, all tiles can be divided into the above two categories\footnote{For $q>3$ but odd, there are edges of VPs that are neither inward nor outward facing. However, we can take such legs to be outward facing choosing some sequence of adding tiles. For instance, when two adjacent tensors are connected on the same layer (two nearby polygons share a common edge that is neither inward nor outward facing), we can take one of them to be EP which has one inward facing edge and the other to be VP which has two.}.

Each EP has one inward facing leg that connects to the direct ancestor, and each VP has two. Note that a VP cannot have direct ancestors that are only VPs. This is by construction, where we always add VP after growing EPs. Therefore, for a VP, at least one of its direct ancestor/descendent is an EP. 

 Note that we will only focus on operator insertions where the operator is not proportional to the identity. This is because the connected two-point function for identity operators always vanishes. Then, for a single operator insertion on the boundary, it is either inserted on a tensor on EP or a VP. 
 If the former, then the coarse-graining ascending super-operator can be reduced to contracting the 1-isometry with one operator insertion (Figure~\ref{fig:epvp_contract}a).
 Because the tensor corrects at least two located errors $d\geq 3$, this contraction is zero. One can see this by first tracing the other $p-2$ legs without operator insertions, which reduces to the identity map tensoring a single operator insertion of the form $\Tr[O\rho]$ where $\rho\propto I$. We can then decompose the operator $O$ into a part that is proportional to the identity and a part that is traceless. The first part coarse-grains to an operator proportional to the identity; they do not contribute to non-trivial correlations. The tensor contraction of the second part vanishes by tracelessness.  Because the network always has EPs in its outer layer, this is sufficient to show that the state generated by the network contains trivial correlation functions.
 
\begin{figure*}[t!]
    \centering
    \begin{subfigure}[t]{0.25\textwidth}
        \centering
        \includegraphics[height=1.5in]{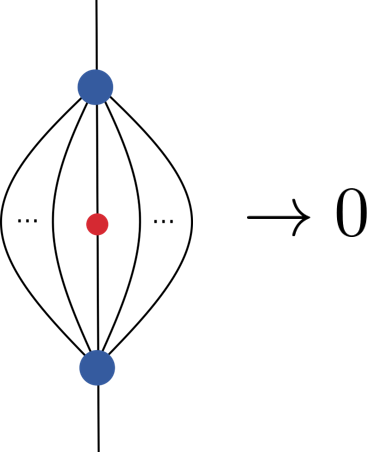}
        \caption{}
    \end{subfigure}%
    ~ 
    \begin{subfigure}[t]{0.75\textwidth}
        \centering
        \includegraphics[height=1.5in]{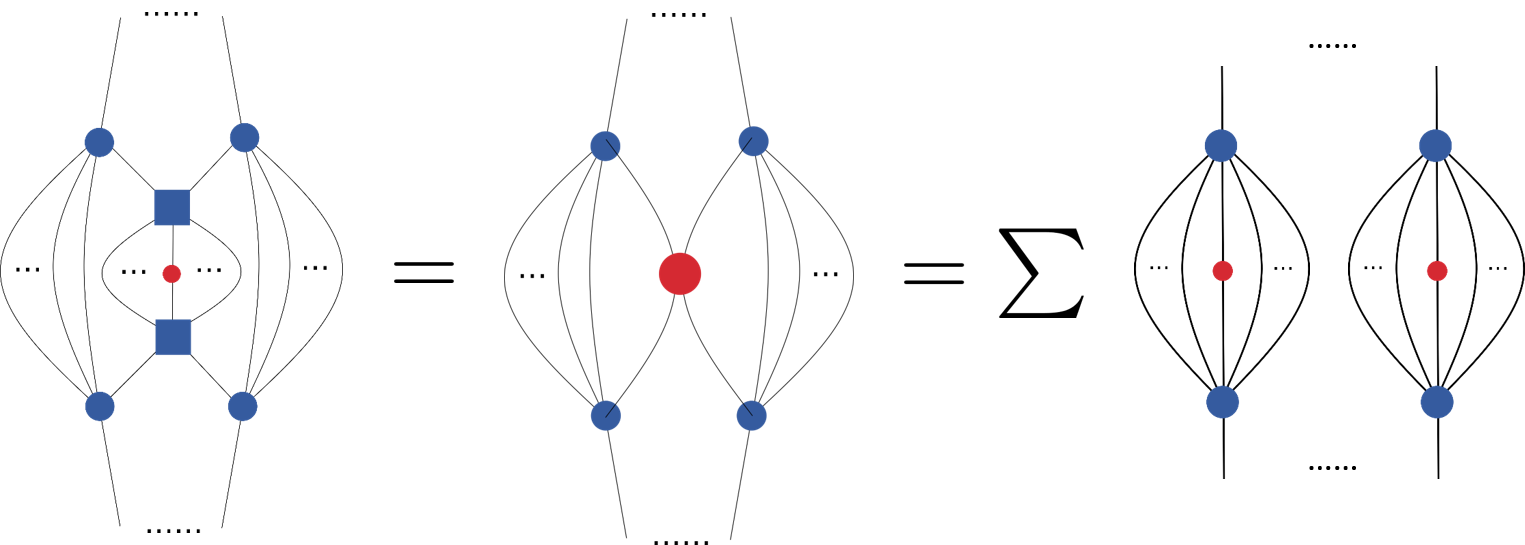}
        \caption{}
    \end{subfigure}
    \caption{(a) For inserted operators not proportional to the identity (red dot), such components are found on EPs and are contracted to 0. (b) An operator inserted on a VP tensor can be coarse-grained to a weight 2 operator insertion. It can then be decomposed as sum over weight 1 operators contracted on EPs.}
    \label{fig:epvp_contract}
\end{figure*}

For completeness, let us also examine operator inserted on VPs.
If the boundary operator is inserted on a VP, then either the tensor contracts to 0, when $d>3$, or it produces a weight 2 operator on the parent layer. The former again is trivial. For the latter case, because the graph is simple, the 2 operators must be passed onto 2 different tensors in the parent layer. At least one of these tensors is on an EP because they are immediate ancestors of a VP. They have a weight 1 insertion on each tensor, which makes the EP tensor contraction vanish also (Figure \ref{fig:epvp_contract}b). The only non-trivial term from such contractions is if the parent contains only one EP and that an identity operator is pushed to the parent EP while a weight one operator is pushed to the parent VP. However, this cannot always be true because the contraction terminates on a ``top tensor'' whose descendents are EPs only. Therefore, there must exist a layer for which both parents are EPs. Hence for $q>3$, we do not have any non-trivial super ascending operator. Therefore a two-point function is trivial as long as the two insertions are sufficiently far apart on the boundary.

Case 2 ($q=3$):
By construction, each descendent of the central tile is immediately adjacent to two others in the same layer. Therefore, for a tensor network on this tiling to be contractible at all, it has to be at least a 3-isometry. Again, we drop the bulk indices for convenience. Adding the completely regularity condition, this implies $d\geq 4$. Thus for any operator inserted on a VP, which has only 2 edges facing inwards, insertion on these sites will vanish. Because there are VPs in this tiling, there will be two-point functions that are trivial. 

\end{proof}
\section{Details of the HMERA model}\label{app:details}
\subsection{Super-operators}
Recall that tensor networks of this type can yield a power-law decaying correlation as shown in (\ref{eqn:correlationinTN}) where the scaling dimensions are given by (\ref{eqn:averageeigenvalue}). To preview the results of this subsection, the different types of superoperators are depicted in Figure \ref{fig:superoperators} and their corresponding probabilities are presented in Eq. \ref{eqn:probability super operator}. As long as we show that $(1)$ the only eigen-operator for eigenvalue 1 is the identity operator; $(2)$ for each type of superoperator all the other eigenvalues indeed have absolute value less than $1$, we can conclude it produces a decaying connected two-point function that is roughly a power law.

We start by investigating the properties of each tensor individually, as the superoperators interpolating between layers are composed of them. The perfect tensor defines a $2$-isometry depending on the state of the logical qubit. See the left panel of Figure \ref{fig:perfect tensor}. As the two copies form a direct product, studying a single copy suffices. Most generally, we initialize the logical qubit in the state $|\phi\rangle=\cos\alpha |0\rangle+e^{i\beta}\sin\alpha\ket{1}$ and denote the corresponding isometry as $W_p$.  By distinguishing the inward 2 legs from the others, the original rotational symmetry breaks down to the reflection symmetry $1\leftrightarrow2,~ 3\leftrightarrow5$. Except for the top few layers, the incoming operator of $W_p$ will at most be weight 2, and will have support either on $3,4$ or $4,5$. This motivates us to view $W_p (\cdot) W_p^\dagger$ as a super-operator sending a weight 2 operator to another weight 2 operator. Studying the case in which operators have support on legs 3 and 4 would suffice because of  the reflection symmetry. If the operator is weight 1, say $X_3$, we view it as a weight 2 operator $X_3I_4$, etc. The identity operators together with Pauli matrices on both legs form a set of bases of weight 2 operators. The super-operator can be viewed as a $16\times16$ matrix in such bases. The right panel of Figure \ref{fig:perfect tensor} illustrates how to determine one entry of the super operator. One can easily calculate the eigen-operators and the eigenvalues of this superoperator. It turns out that there is only one eigenvalue with norm 1, whose eigen-operator is the identity operator, as expected. All the other eigenvalues, though complex, have norm less than 1. This is manifestly seen in Figure \ref{fig:perfect tensor eigenvalues}. Since for the tensor product of two matrices, the eigenvalues are pairwise products of each matrix, we conclude that for the double copy of $2$-isometries, all but one eigenvalue have norm less than $1$ as well.

\begin{figure}[h]
    \centering
    \includegraphics[width=0.65\textwidth]{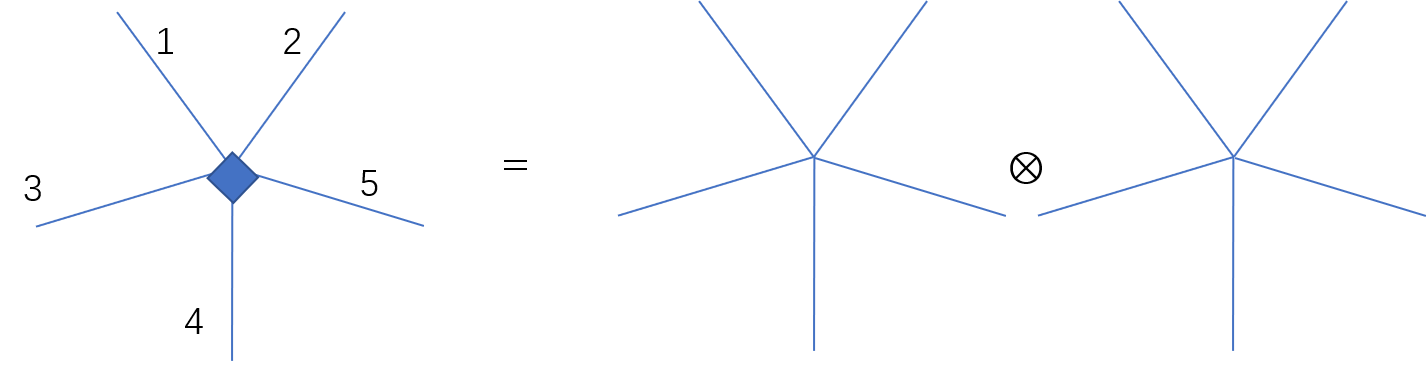}
    \includegraphics[width=0.34\textwidth]{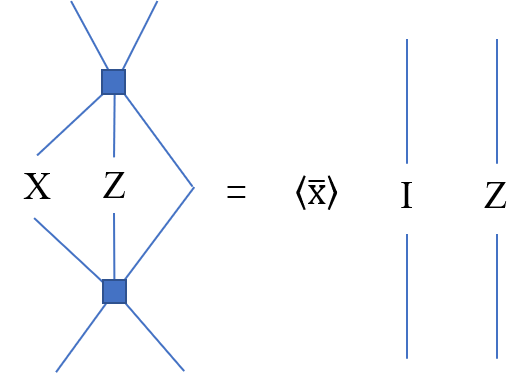}
    \caption{Left: The perfect tensor on the pentagon defines a 2-isometry. Right: An example of how the super-operator defined by $W_p (\cdot) W_p^\dagger$ acting on a weight 2 operator. The prefactor $\langle\bar{X}\rangle$ is the expectation value of the logical $\bar{X}$ operator in state $\ket{\phi}$.}
    \label{fig:perfect tensor}
\end{figure}
\begin{figure}
    \centering
    \includegraphics[width=0.5\textwidth]{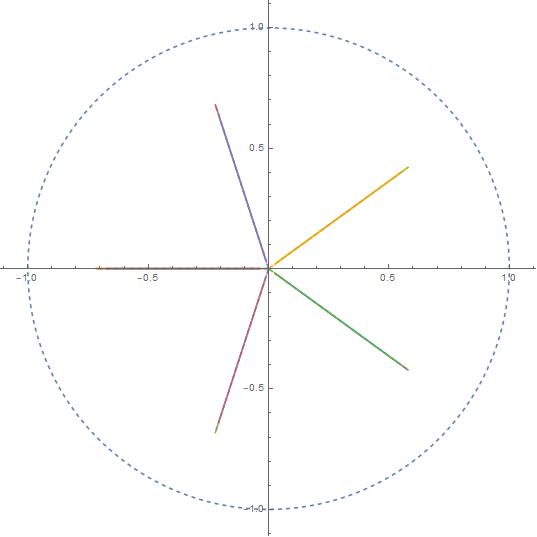}
    \caption{Eigenvalues of the super-operator defined by $W_p (\cdot) W_p^\dagger$. This is depicted on the complex plane with parameter $\alpha=\frac{\pi}{3},~\beta\in(0,2\pi)$. The outer dashed one is the unit circle. This manifestly shows that all but one eigenvalues have norm less than 1.}
    \label{fig:perfect tensor eigenvalues}
\end{figure}
 We then investigate the properties of the imperfect code defined in Figure \ref{fig:imperfect tensor illustration}. This defines a $1$-isometry. We denote the isometry as $W_I$. In this case, the original five-fold rotational symmetry breaks down to the reflection symmetry $2\leftrightarrow5,~ 3\leftrightarrow4$. It turns out that one can have weight 1 or 2 operator fed into the super-operator defined by $W_I(\cdot)W_I^\dagger$. In the weight 1 case, one can solve for its eigenvalues and eigen-operators. In the weight 2 case, operators can have support on leg $2,3$ or $3,4$ or $4,5$. In this case no eigen-operator can be defined because the operator weight changes. Instead, we calculate the operator norm after applying the super-operator $W_I(\cdot)W_I^\dagger$. The identity operator together with Pauli matrices of one qubit forms the bases of operators. For weight 2 operators, this is a Hilbert space of 256 dimensions. Since the bases operators all have norm 1, to show that the super-operator $W_I(\cdot)W_I^\dagger$ result in decaying two point functions, we  show the resulting operators have norm less than 1, except for the identity operators on 4 qubits. For operators inserted on leg $2,3$, numerical results in Figure \ref{fig:imperfect tensor norm} explicitly verifies this property. Inserting the operators in legs $3,4$ turns out to produce the same figure. Inserting the operators in legs $4,5$ is the same as inserting in $2,3$ by symmetry.
\begin{figure}
    \centering
    \includegraphics[width=0.7\textwidth]{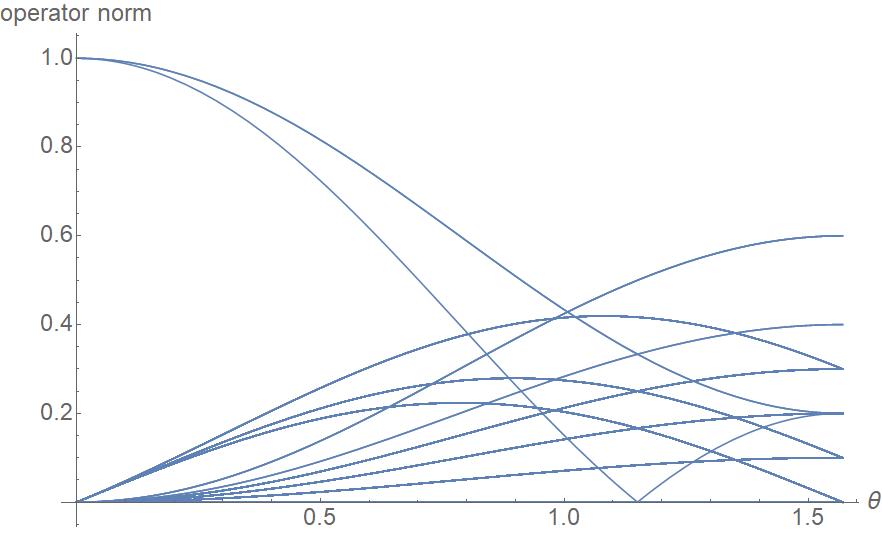}
    \caption{ The norms of weight 2 operators inserted on leg $2,3$ with respect to the parameter $\theta$ in the range $\theta\in (0,\frac{\pi}{2})$. We plot all 255 bases after the action of super-operator $W_I(\cdot)W_I^\dagger$. This explicitly shows that they have norm strictly less than 1 when $\theta>0$.}
    \label{fig:imperfect tensor norm}
\end{figure}

With the properties of perfect and imperfect codes above, we can already conclude that our model can produce decaying yet non-vanishing two point correlations. Furthermore, we would like to identify different types of super-operators. They are collectively depicted in Figure \ref{fig:superoperators}. We draw the dual graph so that the isometries are on the nodes. We summarize them in the following. A weight 1 operator can be coarse-grained either into a weight 1 operator via an imperfect code (panel 1), or into a weight 2 operator via a perfect tensor (panel 2). A weight 2 operator can be coarse-grained into a weight 1 operator via an imperfect tensor (panel 3), or into a weight 2 operator via a product of imperfect tensors (panel 4), or into a weight 3 tensor via a product of perfect and imperfect tensor (panel 5). A weight 3 operator may remain ``stable'' as weight 3 (panel 6 and 7), or ``shrink'' to weight 2 (panel 8). 

\begin{figure}
    \centering
    \includegraphics[width=0.75\textwidth]{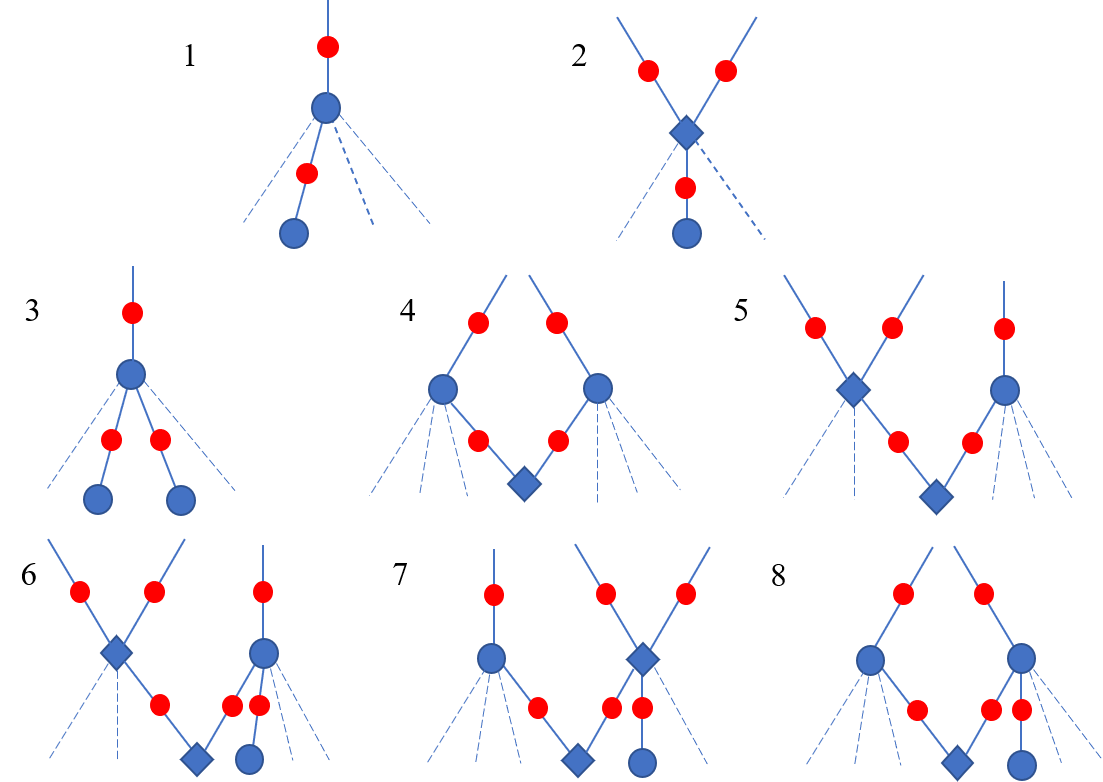}
    \caption{This shows the different superoperators during the coarse-graining process. The coarse-graining goes from bottom to top in each panel. Blue circles and diamonds represent imperfect and perfect tensors respectively. The red circles mark the legs on which the operators act. The irrelevant legs are depicted in dashed lines. The first row shows the cases of weight 1 operators, the second row for weight 2 and the third for weight 3.} 
    \label{fig:superoperators}
\end{figure}

Since the operator weight may change, it is not feasible to solve for the eigen-operators of super operators defined on an individual layer. It is possible to calculate the eigenvalue and the eigen-operators for the procedure during which a weight 1 operator increases to weight 3 then decreases to weight 1, i.e., defining a super-operator across multiple layers. However the number of layers involved depends largely on particular processes. We adopt a different strategy to calculate the probability of each type of super-operator appearing in a typical coarse graining process. For a given insertion point, the total coarse-graining process is fixed. We randomize the insertion point at the boundary for a generic operator so we can calculate these probabilities as an average over different coarse-graining processes\footnote{This is similar to the strategy in \cite{MERAalgo}, where the authors averaged over the super-operators, which can also be viewed as an averaging over different coarse-graining processes.}. An assumption is that the two operators inserted at the cut-off boundary are well-separated, i.e., $N\sim \log|i-j|\gg 1$, so the two insertions will be coarse-grained separately before they meet at the top few layers. It is this part that contributes to the long range behaviour of the correlation. Also, the notion of taking the average super-operator with said probability distribution make more sense when $N$ is large.

The types of super-operators and their corresponding probabilities depend primarily on the graph structure of the tensor network. 
We start by reviewing the formulae for the number of perfect tensors $g(n)$ and imperfect tensors $f(n)$ in each layer. One node is chosen as the center and is labelled layer $0$. Then if we need to go across a minimal number of $n$ edges to reach from the center to a particular node, it is assigned to layer $n$. The labelling procedure is unique and unambiguous in this model. This setup is familiar in the literature \cite{Pastawski:2015qua} and we reproduce its result here for readability:
\begin{equation}
\begin{split}
    f(n)&=\frac{5-\sqrt{5}}{2}\Big(\frac{3+\sqrt{5}}{2}\Big)^n\Big[1+O\Big(\Big(\frac{3-\sqrt{5}}{3+\sqrt{5}}\Big)^n\Big)\Big],\\
    g(n)&=\frac{3\sqrt{5}-5}{2}\Big(\frac{3+\sqrt{5}}{2}\Big)^n\Big[1+O\Big(\Big(\frac{3-\sqrt{5}}{3+\sqrt{5}}\Big)^n\Big)\Big].\\
\end{split}
\end{equation}
We work in the case where $n$ is large so we can set the factors in the square brackets to $1$. The number of total layers $N$ can be calculated more precisely as
\begin{equation}
    N\approx \frac{\log| i-j|}{\log(\frac{3+\sqrt{5}}{2})}.
\end{equation}

Now we are able to calculate the probability of a weight 1 operator expanding to weight 2. There are two situations. If the weight 1 operator is at the start of the coarse-graining process, the probability of it spreading is the same as the probability of inserting an operator on legs that  are directly connected to a perfect code.
\begin{equation}
    P_0(1\to 2)=\frac{3g(n)}{4f(n)+3g(n)}=3-\frac{6}{\sqrt{5}},~ P_0(1\to 1)=\frac{4f(n)}{4f(n)+3g(n)}=\frac{6}{\sqrt{5}}-2.
\end{equation}
This only affects the initial step and does not impact the final result in Eq. \ref{eqn:probability super operator}. If the weight 1 operator appears during  coarse-graining process, then it has a different probability. The reason is that the weight 1 operator necessarily came from the superoperator of an imperfect code. See Figure \ref{fig:weight 1} for an illustration. So this is the conditional probability of connecting to a perfect code at layer $n$, given that it is connected to imperfect code at layer $n+1$. 
\begin{equation}\label{eqn:probability weight 1}
     P(1\to 2|1)=\frac{g(n)}{2f(n)+g(n)}=\sqrt{5}-2,~
     P(1\to1|1)=\frac{2f(n)}{2f(n)+g(n)}=3-\sqrt{5}.
\end{equation}

\begin{figure}[h]
    \centering
    \includegraphics[width=0.8\textwidth]{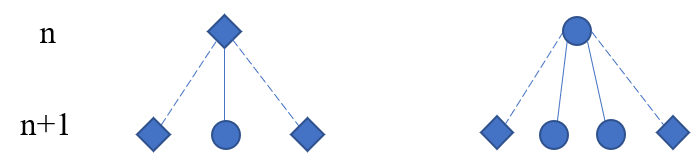}
    \caption{These figures illustrate the possible insertions of weight 1 operators. If the operator is inserted at the boundary, it can appear on any leg randomly. If it is a result of the coarse-graining, it can only appear on solid lines which connect to imperfect tensor in the layer $n+1$. It can not appear on dashed lines while being weight 1 in generic cases.}
    \label{fig:weight 1}
\end{figure}

If the operator is weight 2, then its behaviour is more complicated as it depends on the previous two coarse-graining steps.
Suppose an operator has just expanded from weight 1 to 2 by passing through a perfect code, it may either remain weight 2, shrink to weight 1, or expand to weight 3.  The illustrations are depicted in the second row of Figure \ref{fig:superoperators}. Ignoring the case where the operator weight remains constant, as it does not involve any transitions, from the graph one finds that when looking at layer $n$, there are $f(n-2)$ cases that corresponds to the panel $3$ of Figure \ref{fig:superoperators} and $2g(n-1)$ cases that corresponds to the panel $5$ of Figure \ref{fig:superoperators}. So this leads to the conditional probabilities: 
\begin{equation}\label{eqn:probability weight 2}
    P(2\to2|1\to2)=\frac{f(n-2)}{g(n)}=\sqrt{5}-2,~ P(2\to3|1\to2)=\frac{2g(n-1)}{g(n)}=3-\sqrt{5}.
\end{equation}
If an operator has just shrunk from weight 3 or stayed as weight 2, it will necessarily shrink to weight 1. This is illustrated in panel 3, 4, 8 in Figure: \ref{fig:superoperators}.

\begin{equation}
    P(2\to1|2\to2)=P(2\to1|3\to2)=1.
\end{equation}

For a weight 3 operator, it is similar to the behaviour of weight 1 operator. It may stay weight 3 or shrink to weight 2 then to weight 1. This depends on which tensor these legs connect to in the next layer. Similar to weight 2 cases, if they both connect to imperfect tensors, the weight would shrink to 2; if they connect to 1 perfect and 1 imperfect tensor, the weight would remain to be 3. The illustrations are collected in the third row of Figure \ref{fig:superoperators}.
So we have the following conditional probabilities:
\begin{equation}\label{eqn:probability weight 3}
    P(3\to2|3)=\frac{f(n-2)}{g(n)}=\sqrt{5}-2,~ P(3\to3|3)=\frac{2g(n-1)}{g(n)}=3-\sqrt{5}.
\end{equation}

With all these conditional probabilities, we are able to calculate the unconditional probabilities of different types of super-operators appearing in a sufficiently long and typical coarse-graining process. We take $p(1\to1)=x,~p(3\to3)=y$. Then $p(1\to2)=\frac{p(1\to2|1)}{p(1\to1|1)}x$ from the conditional probabilities in Eq. \ref{eqn:probability weight 1}. Similarly $p(3\to2)=\frac{p(3\to2|3)}{p(3\to3|3)}y$ from Eq. \ref{eqn:probability weight 3}. Next we have $p(2\to1)=p(1\to2)$ and $p(2\to3)=p(3\to2)$. This is because when the total number of the layers $N$ is large enough and intermediate coarse-graining processes dominate. If an operator remains weight $1$ through several layers, it always begins with a transition $2\to 1$ and ends with a transition $1\to 2$. The same happens for the weight-$3$ case. The remaining unknown $p(2\to2)$ can be obtained from the normalization condition and from the conditional probability in Eq. \ref{eqn:probability weight 2}, such that $p(2\to2)=\frac{p(2\to2|1\to2)}{p(2\to3|1\to2)}p(2\to3)$. This yields two equations of $x,~y$ and we can solve for them to obtain Eq. \ref{eqn:probability super operator}.

The resulting probabilities are,
\begin{equation}\label{eqn:probability super operator}
    \begin{split}
        p(1\to1)&=\frac{3\sqrt{5}}{5}-1\approx34.16\%,~ p(1\to2)=1-\frac{2\sqrt{5}}{5}\approx 10.56\%,\\
        p(2\to1)&=1-\frac{2\sqrt{5}}{5}\approx 10.56\%,~p(2\to2)=\frac{9\sqrt{5}}{5}-4\approx2.49\%,~p(2\to3)=5-\frac{11\sqrt{5}}{5}\approx 8.07\%,\\
        p(3\to2)&=5-\frac{11\sqrt{5}}{5}\approx 8.07\%,~p(3\to3)=\frac{14\sqrt{5}}{5}-6\approx 26.10\%.
    \end{split}
\end{equation}
When $N$ is sufficiently large, there are approximately $p(i\to j)N$ layers in which the coarse-graining operation is sending a weight-$i$ operator to weight-$j$. These are precisely the probabilities $p_a$'s in Eq. \ref{eqn:averageeigenvalue}.

In summary, we have shown that each super-operator has a dominant singular value 1 with only the identity operator as its sole eigen-operator; all other singular values are less than 1. By averaging such super-operators, our HMERA model produces power-law decaying correlation functions. The probabilities of each type of super operator are further calculated in Eq. \ref{eqn:probability super operator}. This allows us to estimate the average eigenvalue $\bar{\lambda}$ that appears in in Eq. \ref{eqn:correlationinTN} and  \ref{eqn:averageeigenvalue}.

\subsection{Non-flat entanglement spectrum}
\label{subsec:entspec}

It was pointed out in \cite{Dong:2018seb,Akers:2018fow} that, to leading order, the density matrix obtained from the gravitational path integral evaluated with fixed area RT surface has a flat entanglement spectrum. This implies that the Renyi entropies of the resulting density matrix is identical to all orders. It is consistent with the existing QECC constructions \cite{Pastawski:2015qua,Hayden:2016cfa}, yet in contradiction to any real CFT models \cite{cftspec}. In this section, we show that the tensor sub-networks in our proposed approximate QECC model can obtain non-flat entanglement spectra. 

Here we consider two examples. In Figure  \ref{fig:density matrix 3 type 1} and \ref{fig:density matrix 3 type 2} we calculate explicitly the eigenvalues of some reduced density matrices of three sites at the boundary. As each leg represents a qudit or a pair of qubits, the density matrices are of dimension 64. With two legs contracted, they both have rank 16. If the $\theta$ parameter is $0$, the eigenvalues are indeed flat, i.e. all the non-zero ones are $1/16$. In the figures we turn on $\theta$ to a generic random non-zero value and the spectrum is no longer flat.
 While they are insufficient to show that the entanglement spectrum is non-flat for all subsystems in the actual HMERA, we can gain some intuition from such toy examples.

\begin{figure}[ht]
    \centering
    \includegraphics[width=0.7\textwidth]{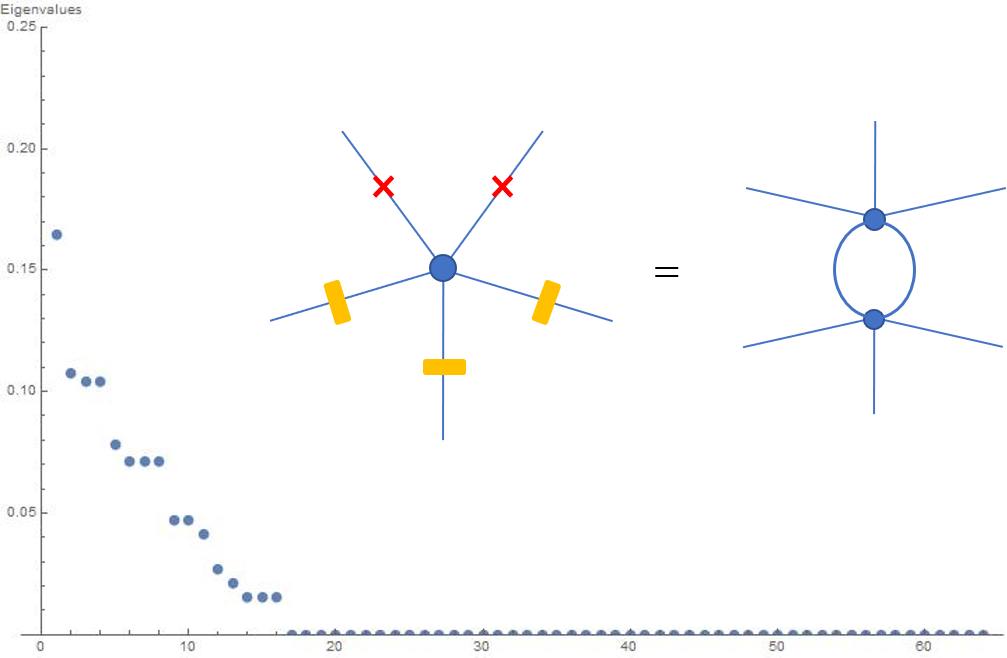}
    \caption{The reduced density matrix of the sites labelled by yellow rectangle are calculated. Its corresponding Schimdt coefficients are plotted on the left.  The legs that are labelled by red crosses are contracted to compute the reduced density matrix in the tensor network.} 
    \label{fig:density matrix 3 type 1}
\end{figure}

\begin{figure}[ht]
    \centering
    \includegraphics[width=0.7\textwidth]{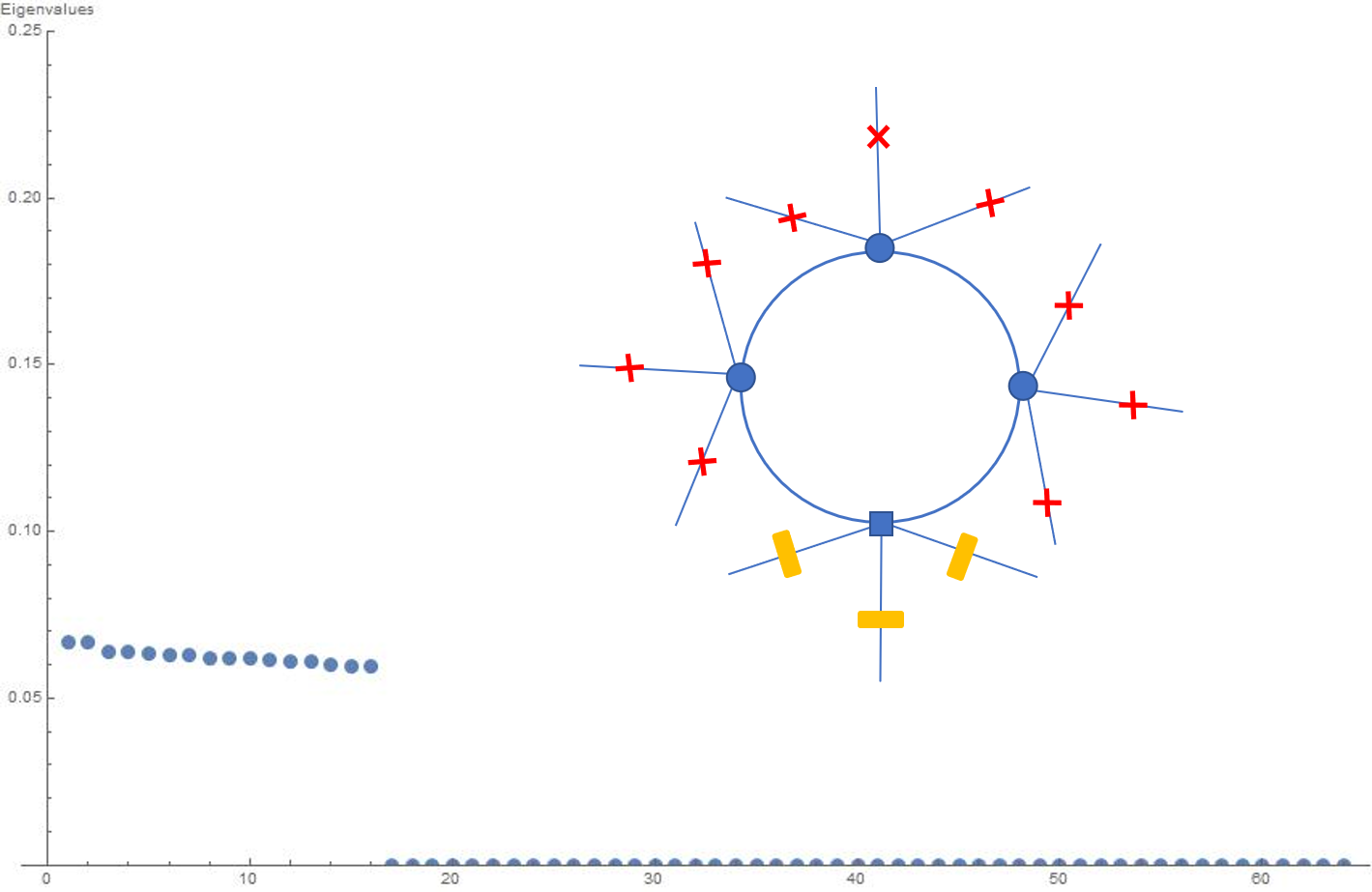}
    \caption{The eigenvalues of another reduced density matrix. The legends are the same as those in Figure. \ref{fig:density matrix 3 type 1}. } 
    \label{fig:density matrix 3 type 2}
\end{figure}

\subsection{Imperfect Tensor Properties}
\label{app:imptensor}
The imperfect tensor itself is an approximate quantum error correction code that inherits the code properties of two copies of the $[[5,1,3]]$ code in the small $\theta$ regime. For any value of $\theta\ne 0$, it is also an exact $1$-isometry when the logical qubits are fixed to be the $|\overline{00}\rangle$ state. While its isometric properties can be verified by showing all single-qudit (single-leg) reduced density matrices are maximally mixed, they can also be verified through tensor contraction. The latter technique also generalizes to other stabilizer codes without matrix computations. 

\begin{proof}
\begin{figure}[H]
    \centering
    \includegraphics[width=0.5\textwidth]{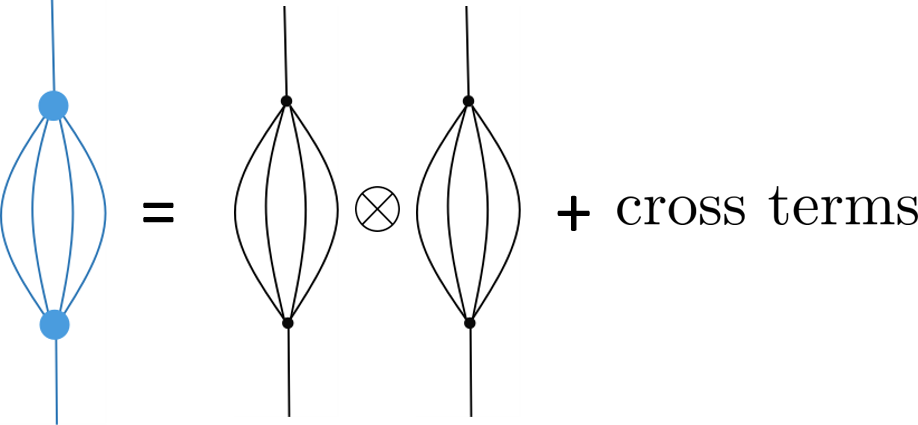}
    \caption{Contraction of two imperfect tensors (blue, left) with their logical states fixed in $|\bar{0}\rangle$ can be decomposed into contractions of perfect tensors with non-trivial operator insertions in the cross terms.}
    \label{fig:imp_contract}
\end{figure}
The imperfect tensor can be expanded as the sum over diagonal terms and cross terms~(Figure~\ref{fig:imp_contract}). Because we have chosen the coefficients to be normalized, the diagonal terms sum to the identity operator. It remains to show that the cross terms vanish. 

\begin{figure}[H]
    \centering
    \includegraphics[width=0.7\textwidth]{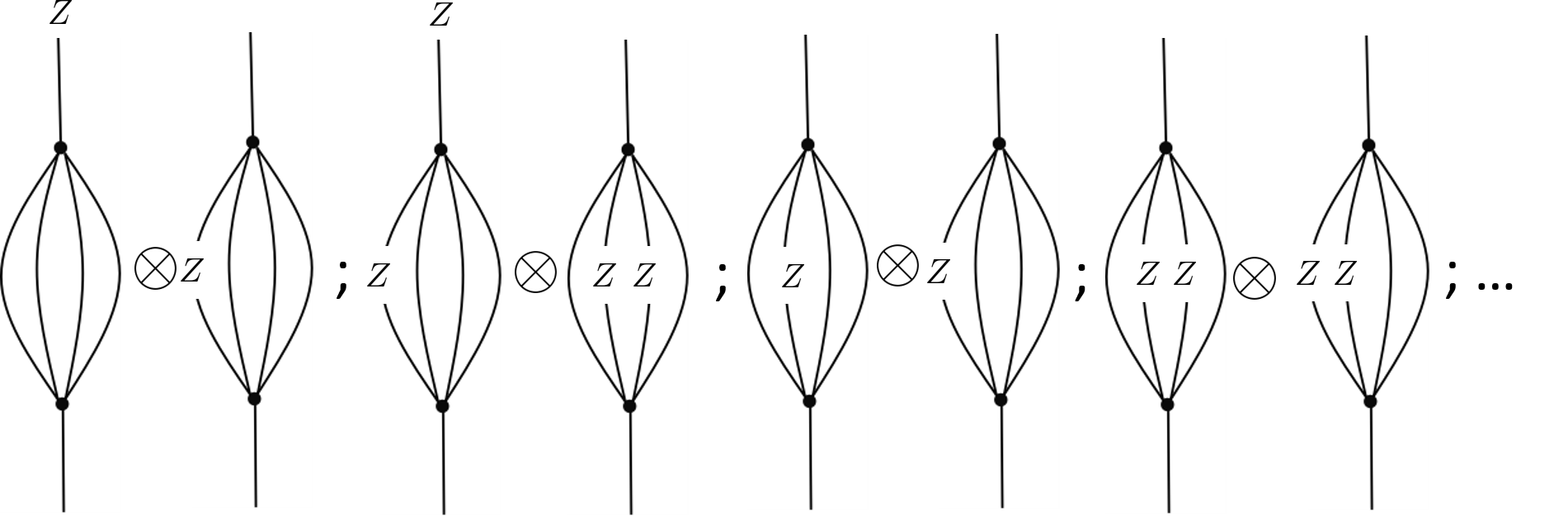}
    \caption{Examples of cross terms that are present in the contraction. The four diagrams shown here represent the four types of terms that are relevant to us. All other terms are variations of these terms with $Z$ operators in different places.}
    \label{fig:cross_terms}
\end{figure}

The cross terms consist of the type of tensor contractions shown in Figure~\ref{fig:cross_terms}.
All of them contain factors of the following two types: ones with a single $Z$ insertion (type-1) and ones with weight-2 $ZZ$ insertions (type-2).

The stabilizer group of the double copy perfect code is $\langle S_i\otimes I, I\otimes S_j\rangle$, where $S_i, i\in\{1,\dots,4\}$ are the original 5 qubit code stabilizer generators.
Suppose we can find a stabilizer element, $S=S'\otimes I$, that only acts non-trivially on the 4 contracted legs and anti-commute with the inserted operator $O$, then by the commutation relation in Figure~\ref{fig:anti-comstab}, the terms in Figure~\ref{fig:cross_terms} will vanish. 

\begin{figure}
    \centering
    \includegraphics[width=0.7\textwidth]{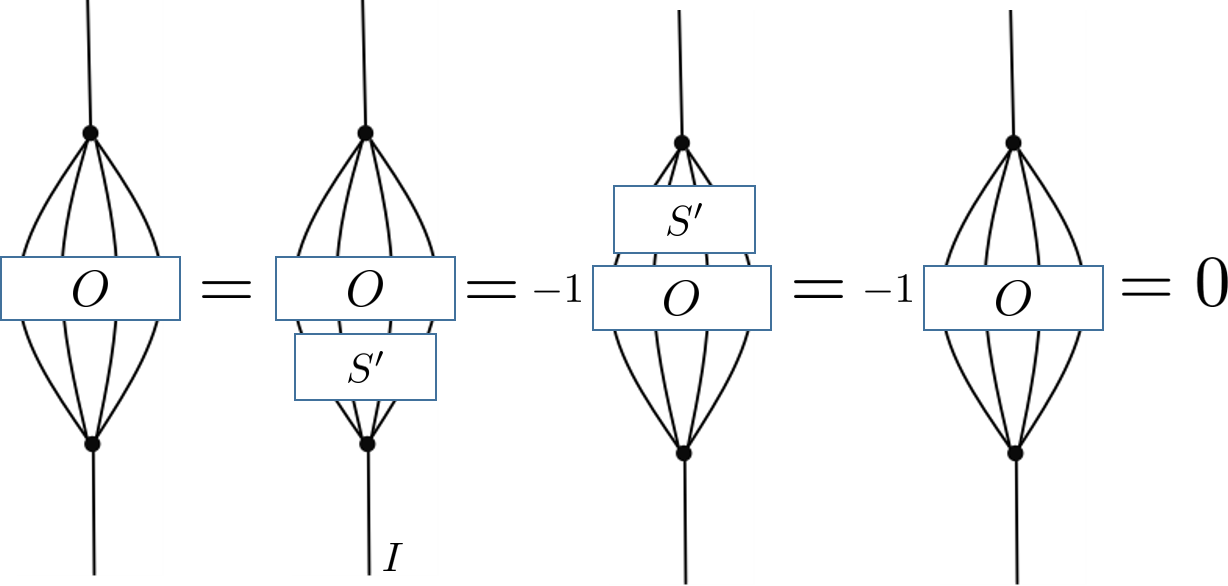}
    \caption{If $S=S'\otimes I$ and $\{S',O\}=0$, then the tensor contraction with operator $O$ inserted is trivial.}
    \label{fig:anti-comstab}
\end{figure}

Indeed, this can be trivially done for terms of type-1 --- let us treat the uncontracted leg(s) and the one potential Pauli Z error insertion within the contraction as two located errors.  Because the 5 qubit code detects two errors, there must exist stabilizers of the 5 qubit code that anticommute with any insertion of $P_i\otimes P_j$ where $P\in \{I, X, Y,Z\}$. Therefore, all terms in Figure~\ref{fig:cross_terms} vanish except for the last one where both of them are type-2 factors. 

For type-2 contraction with $ZZ$ insertions, the above argument no longer works because the code doesn't detect any three errors. To show that it vanishes, we do an exhaustive search of all stabilizers that will anti-commute with the terms with $ZZ$ insertions.

The stabilizer group of the 5 qubit code is 
\begin{equation}
    \mathcal{S} = \langle XZZXI, IXZZX, XIXZZ, ZXIXZ\rangle.
\end{equation}
Without loss of generality, let the first qubit be the uncontracted leg of the perfect tensor, then the only stabilizer elements from the 5 qubit code that act trivially on the uncontracted legs are $IXZZX, IYXXY,IZYYZ$ where we ignore the potential minus signs. These anti-commute with all weight-2 Z insertions except $IZIIZ, IIZZI$. However, recall that the state $|\bar{0}\rangle$ is also stabilized by $\bar{Z}=ZZZZZ$. A representation of $\bar{Z}$ is $IIYZY$, which anti-commutes with both $IZIIZ$ and $IIZZI$. Therefore all cross-terms vanish.

Note that other logical states are not stabilized by both $\bar{Z}\bar{I}$ and $\bar{I}\bar{Z}$. As there are no other stabilizers that anti-commute with $IZIIZ,IIZZI$, the tensor is not a $1$-isometry when it is in other logical states.

\end{proof}

Because the imperfect code in the $|\overline{00}\rangle$ state is a 1-isometry, any two qubits associated with one leg must be maximally entangled with the rest of the system. Because any leg is maximally mixed, it contains no information of the logical state. Thus, the code corrects any one qudit erasure error; it is a $[[5,0,2]]_4$ code on dimension-4 qudits. 

\end{appendix}

\bibliographystyle{unsrt}
\bibliography{paper}
\end{document}